\pgfplotsset{width=10cm,compat=1.9}
\newtheorem{lemma}{Lemma}
\newtheorem{theorem}{Theorem}
\theoremstyle{definition}
\newtheorem{definition}{Definition}
\newtheorem{example}{Example}
\newenvironment{customthm}[1]
  {\innercustomthm}
  {\endinnercustomthm}
\newenvironment{customlemma}[1]
  {\innercustomlemma}
  {\endinnercustomlemma}
\newenvironment{customexample}[1]
  {\innercustomexample}
  {\endinnercustomexample}  
\newenvironment{customcorollary}[1]
  {\innercustomcorollary}
  {\endinnercustomcorollary}
\title{Fair and Efficient Allocations with Limited Demands  }
\author{

    Sushirdeep Narayana, Ian A. Kash \\
}
\begin{document}

\maketitle

\begin{abstract}
We study the fair division problem of allocating multiple resources among a set of agents with Leontief preferences that are each required to complete a finite amount of work, which we term ``limited demands.'' We examine the behavior of the classic Dominant Resource Fairness (DRF) mechanism in this setting and show it is fair but only weakly Pareto optimal and inefficient in many natural examples. We propose as an alternative the Least Cost Product (LCP) mechanism, a natural adaptation of Maximum Nash Welfare to this setting. 
We characterize the structure of allocations of the LCP mechanism in this setting, show that it is Pareto efficient, and that it satisfies the relatively weak fairness property of sharing incentives.  While we prove it satisfies the stronger fairness property of (expected) envy freeness in some special cases, we provide a counterexample showing it does not do so in general, a striking contrast to the ``unreasonable fairness'' of Maximum Nash Welfare in other settings.  Simulations suggest, however, that these violations of envy freeness are rare in randomly generated examples.  
\end{abstract}

\noindent
\section{Introduction}

\noindent Large computational tasks, including those found in cloud computing systems, data centers, and high-performance computing clusters, require multiple heterogeneous resources such as CPU, memory, and network bandwidth.  For these systems to be effective, it is important that these and other resources be allocated among various tasks in a fair and efficient manner.

One popular approach to resource allocation in this setting is the Dominant Resource Fairness (DRF) mechanism, which achieves a number of desirable properties including efficiency (Pareto optimality), fairness (sharing incentives and envy-freeness), and strategy-proofness~\cite{Ghodsi2011drf,parkes2015beyond}.  However, these results are obtained under the assumption that the utility of each task is determined by the amount of resources it receives.  We argue that a more natural model for many applications is that they have limited demands: each task has a finite amount of work to do and would like to complete as soon as possible. Such a model can be applied to the frameworks of job schedulers of modern computing clusters  such as those used in \citet{Hindman2011Mesos}, \citet{vavilapalli2013apacheHadoopYARN}, and \citet{Grandl2016Carbyne}.

The addition of limited demands does not change the fairness and incentive properties, but has a substantial effect on efficiency.  Consider the following simple example.  There are two agents and one (divisible) resource, and each agent needs a single unit of the resource to complete its task.  The DRF allocation evenly shares the resource between the two agents, so both complete at time 2.  Instead, a shortest job first approach will have one agent complete at time 1 and the other complete at time 2.  This makes one agent strictly better off and the other no worse off showing that the result of DRF is no longer Pareto optimal in this setting.  Of course, this allocation is not envy-free, as the agent scheduled second envies the agent scheduled first.  Nevertheless, if we randomize which agent is scheduled first, it would be envy-free in expectation.

While shortest job first is envy-free in expectation and Pareto efficient, with multiple resources it too can pass up opportunities to make desirable trade-offs.  Consider two tasks of similar length: one memory-limited and one CPU-limited.  Shortest job first will pass up the opportunity to run both at the same time, which would slightly slow down the first completion time but substantially speed up the second.  By most metrics of interest (e.g. mean completion time or makespan) this is an improvement.

Thus, a natural goal is an algorithm that can provide a trade off between sharing and shortest job first to achieve efficiency as well as fairness.  
Prior systems work (\citet{Grandl2016Carbyne}) has developed heuristic solutions to this problem; our contribution is a principled fair division approach.
We propose the Least Cost Product (LCP) mechanism, which minimizes the product of completion times (costs) in the same way Maximum Nash Welfare (MNW) mechanisms maximize the product of utilities.\footnote{In fact, if utility is defined to be the reciprocal of completion time, our mechanism chooses exactly the MNW  solution.} We provide a characterization of the structure of solutions it finds and show that the LCP mechanism satisfies sharing incentives and is Pareto optimal but not strategy-proof. We show that it satisfies envy-freeness in expectation for two special cases: when there is only a single resource or when there are only two agents. When more agents or resources are involved, we present an example of an LCP allocation with envy. This is a striking contrast to the fairness properties satisfied by the Maximum Nash Welfare allocations in other settings. All adaptations of Maximum Nash Welfare we are aware of in divisible settings, such as Competitive Equilibrium with Equal Incomes (CEEI) for dividing divisible goods with linear additive utilities  \cite{VARIAN197463, moulin2004fair} and Leontief utilities \cite{Ghodsi2011drf, goel2019markets}; the competitive rule for dividing bads with linear additive disutilities \cite{bogomolnaia2019dividing}; and competitive allocations for dividing mixed manna with concave, continuous and 1-homothetic (dis)utilities \cite{bogomolnaia2017competitive} satisfy strong envy-freeness guarantees. We believe our setting of ``limited demands'' is the first natural divisible setting where an adaptation of Maximum Nash Welfare has resulted in allocations not being envy-free. However, we show in simulations on randomly generated instances that the LCP allocations that violate envy-freeness are rare.

\section{Related Work} \label{related-work}

\citeauthor{Ghodsi2011drf} \shortcite{Ghodsi2011drf} introduced the Dominant Resource Fairness (DRF) mechanism for allocation of resources when the agents have Leontief preferences. \citeauthor{parkes2015beyond} \shortcite{parkes2015beyond} later extended the DRF mechanism in several directions. They generalized the DRF mechanism to more expressive settings and established stronger fairness properties like Group Strategy-Proofness. \citeauthor{kash2014no} \shortcite{kash2014no} extended the DRF mechanism for dynamic settings, where agents arrive with their demands dynamically in an online manner. These works do not consider any notion of a limited amount of resource required by an agent to complete its work.

\citeauthor{VARIAN197463} \shortcite{VARIAN197463} explored the CEEI mechanism which is a market-based  interpretation to allocate goods and showed that the allocations are envy-free and Pareto efficient. \citeauthor{arrow2000handbook} \shortcite{arrow2000handbook} found that the CEEI allocations coincide with allocations where the objective is to maximize the Nash welfare. \citeauthor{Caragiannis2016MaxNashWelfare} \shortcite{Caragiannis2016MaxNashWelfare} showed that for allocating indivisible items the maximum Nash welfare (MNW) solution achieves both EF1 (envy-freeness up-to one good) and Pareto optimality simultaneously, motivating them to call it \textit{unreasonably fair}. For a survey, see~\citet{moulin2019fair}. \citet{Conitzer2017PublicFairDivision} worked on the fair division problem for public decision making and showed that the MNW solution approximates or satisfies various notions of proportionality. \citeauthor{conitzer2019group} \shortcite{conitzer2019group} introduced the notion of \textit{group fairness} which generalizes envy freeness to groups of agents. They showed that locally optimal Nash welfare allocations satisfy "up-to one good" style relaxations of \textit{group fairness}. These works illustrate the promising nature of maximizing Nash welfare. However, none of these works adapt the MNW solution to our setting and none of them observe the failures of fairness we do.

More closely related to our work,
\citeauthor{friedman2003fairness} \shortcite{friedman2003fairness} analyzed the problem of single resource allocation when agents with jobs that required different service times arrived in an online manner. They proposed the Fair Sojourn Protocol which is more efficient and fair compared to the Shortest Remaining Processing Time and Processor Sharing protocols (corresponding to Shortest Job First and DRF in our setting respectively). We explore a generalized, but offline version of their problem where agents have demands on multiple heterogeneous resources. 

\citeauthor{Grandl2016Carbyne} \shortcite{Grandl2016Carbyne} proposed an \textit{altruistic} two-level scheduler called Carbyne that schedules jobs demanding multiple resources where each job takes the form of a Directed Acyclic Graph (DAG) with vertices representing tasks, and edges denoting dependencies. They show that experimentally Carbyne performs better than the DRF and the Shortest Job First (SJF) when evaluating inter-job fairness, performance and cluster efficiency. This work inspired our examination of limited demands.

Finally, \citeauthor{bogomolnaia2017competitive} \shortcite{bogomolnaia2017competitive} generalized the competitive division mechanism for the fair division of mixed manna where items are goods to some agents, but bads or satiated to others. They prove that when the items consist of only bad utility profiles, the competitive profiles are the critical points of the product of disutilities on the efficiency frontier. Our characterization of the LCP mechanism, which is also an allocation of bads, has a similar flavor.

\section{Preliminaries}   \label{prelim}

Let $\mathcal{N} = \{1,2,3,...,n\}$ denote the set of agents and $\mathcal{R} = \{1, 2, ..., m\}$ denote the set of heterogeneous resources available in the system. The resources are assumed to be divisible. Each resource in the set $\mathcal{R}$ has a unit capacity. An agent in the system demands various resources be allocated to it in fixed proportions, known as Leontief preferences. Let $\mathbf{D}$ represent the demands of all the agents with each element $d_{ir}$ denoting the fraction of resource \textit{r} required by agent \textit{i}. We assume these are normalized so that the demand vector for agent \textit{i} denoted by $\mathbf{d_{i}} =  \langle d_{i1}, d_{i2}, ..., d_{im} \rangle$ has $d_{ir} = 1$ for some $r$. The number of tasks that are required to be completed by an agent \textit{i} is $k_{i}$ and is represented by an $n \times n$ diagonal matrix $\mathbf{K}$ with the \textit{i}-th diagonal element equal to $k_{i}$.\footnote{The normalization of $d_i$ means that $k_i$ need not be an integer.} The amount of resources that needs to be consumed by an agent \textit{i} in order to complete its job is given by $\mathbf{w_{i}} = k_{i}\mathbf{d_{i}}$. In matrix notation, this becomes $\mathbf{W} = \mathbf{K} \mathbf{D}$. 

\begin{definition}
An instantaneous allocation $\mathbf{A}
\subseteq \mathbb{R}^{n \times m}$ allocates a portion $A_{ir}$ of resource type \textit{r} to agent \textit{i} subject to the feasibility conditions, $ \displaystyle \sum_{i \in \mathcal{N}} A_{ir} \leq 1, \; \forall r \in \mathcal{R}$; and $A_{i} = \lambda_{i} \mathbf{d_{i}}, \; \forall i \in \mathcal{N} $ and for some $\lambda_{i} \in \mathbb{R}_{\geq 0}$.
\end{definition}

\begin{definition}
A resource allocation mechanism is a function \textit{f} that takes as inputs the amount of resources $\mathbf{W}$ that are required by the agents and maps them to allocations as outputs, that is, $f : \mathbb{R}^{n \times m} \rightarrow (\mathbb{R}^{n \times m})^{\mathbb{R}_{\geq 0}} $ such that $f_{i}(\mathbf{W}) = \mathbf{A}_{i}(\cdot)$, where $\mathbf{A}_{i}(t)$ denotes the allocation of agent $i$ at time $t$, which we refer to as its {\em instantaneous allocation}.
\end{definition}

This definition assumes that resources are fully divisible, which is not necessarily true in real systems.  In practice this can be handled by keeping allocations as close as possible to the divisible ideal over time~\cite{Hindman2011Mesos}.

The cost of an agent characterizes the loss suffered by that agent and is equal to the completion time of the agent for its resource allocation. The cost $c_{i}(f_{i}(\mathbf{W})) = t_{i}$ for an agent \textit{i} under the allocation mechanism \textit{f} can be calculated as the value of $t_i$ that solves $ \mathbf{w_{i}} = \displaystyle \int_{0}^{t_{i}}\mathbf{A}_{i}(t)dt$. 

In order to evaluate and compare the allocations produced by various resource allocation mechanisms, we define four standard fairness and efficiency properties for our setting:
\begin{itemize}
    \item Pareto Optimality (PO): An allocation $\mathbf{A}(\cdot)$ is said to be Pareto optimal if there is no alternative allocation $\mathbf{A}^{'}(\cdot)$  which can make at least one agent strictly better off without making any other agent worse off. Formally, \\ $\forall \mathbf{A}^{'}(\cdot),\quad 
    (\exists i \in \mathcal{N}, \quad c_{i}(\mathbf{A}^{'}_{i}(\cdot)) < c_{i}(\mathbf{A}_{i}(\cdot))) \\ \implies  (\exists j \in \mathcal{N}, \quad c_{j}(\mathbf{A}^{'}_{j}(\cdot)) > c_{j}(\mathbf{A}_{j}(\cdot)))$.
    \\An allocation $\mathbf{A}(\cdot)$  is weakly Pareto optimal if there is no alternative allocation $\mathbf{A}^{'}(\cdot)$ which would strictly benefit all the agents. That is,\\ $\nexists \mathbf{A}^{'}(\cdot), \;$ where $\forall i \in \mathcal{N}, \quad c_{i}(\mathbf{A}^{'}_{i}(\cdot)) \; < \; c_{i}(\mathbf{A}_{i}(\cdot))$
    
    \item Sharing Incentives (SI): By abuse of notation, we will refer to $\mathbf{A}_{\textnormal{SI}}(\cdot) = \mathbf{A}_{\textnormal{SI}} = \; \langle \frac{1}{n}, \frac{1}{n}, ..., \frac{1}{n} \rangle$ as the instantaneous allocation that splits all the \textit{m} resources equally among all the agents. An allocation mechanism satisfies Sharing Incentives if, \quad 
    $c_{i}(\mathbf{A}_{i}(\cdot)) \; \leq \; c_{i}(\mathbf{A}_{\textnormal{SI}}) = n \cdot k_{i}, \quad \forall i \in \mathcal{N}$. In other words, if the cost for each agent's allocation under the allocation mechanism is at-most the cost encountered when there is an equal split of the resources among agents, the allocation mechanism satisfies SI.
    
    \item  Envy-Freeness (EF): An allocation mechanism is envy-free (EF) if 
    \quad $\forall i,j \in \mathcal{N}, \quad c_{i}(\mathbf{A}_{i}(\cdot)) \; \leq \; c_{i}(\mathbf{A}_{j}(\cdot))$. That is, each agent would never strictly prefer the resource allocation received by another agent.
    
    A randomized allocation mechanism is envy-free in expectation if,
    $\forall i,j \in \mathcal{N}, \quad \mathbb{E}[c_{i}(\mathbf{A}_{i}(\cdot))] \; \leq \; \mathbb{E}[c_{i}(\mathbf{A}_{j}(\cdot))],$
    where the expectation is taken over the randomness of the mechanism.  In particular, when there are multiple agents with the same amount of work, the allocation mechanism may have multiple optimal solutions.  In these cases, as in the example in the introduction, each individual choice has envy but a uniform random selection from them will be envy-free in expectation.
    
    \item Strategy-Proofness (SP): An allocation mechanism is strategy-proof (SP)  if an agent can never benefit from reporting a false demand, regardless of the demands of the other agents. Given the demands $\mathbf{D}$ and the number of tasks $\mathbf{K}$, let $f(\mathbf{KD})$ be the resulting allocation due to the mechanism \textit{f}. Suppose an agent $i \in \mathcal{N}$ changes to an untruthful demand $\mathbf{d}^{\;'}_{i}$ while the demands of the other agents and the number of tasks $\mathbf{K}$ stay the same. The new demand matrix is denoted by $\mathbf{D}^{'}$. The allocation mechanism \textit{f} is said to be SP if,
    \\$\forall i \in \mathcal{N}, \; \forall d_i, \; c_{i}(f_{i}(\mathbf{KD})) \leq c_{i}(f_{i}(\mathbf{KD}^{'}))$, \\[2 pt]where $\mathbf{D} = \begin{bmatrix} \mathbf{d}_{1} \\ ... \\ \mathbf{d}_{i} \\ ... \\ \mathbf{d}_{n}\end{bmatrix}$, and $\mathbf{D}^{'} = \begin{bmatrix} \mathbf{d}_{1} \\ .. \\ \mathbf{d}^{'}_{i} \\ .. \\ \mathbf{d}_{n}\end{bmatrix}$. 
    
\end{itemize}

\begin{definition}
An allocation is defined to be a \textit{fixed} allocation
when the resources allocated to each agent are fixed over the intervals determined by the different completion times of the agents. The time intervals are given by a sorted order of completion times of the agents (where agents are numbered in increasing order of completion time) $[t_{a}, \; t_{b})$, where $t_{a} \in \{0,t_{1},t_{2},...,t_{n}\}$ and $t_{b} = \underset{i \in \mathcal{N}}{\min} \{t_{i}: t_{i} > t_{a}\}$.
\end{definition}

Proofs omitted from this and subsequent sections can be found in the Appendix. 

\begin{lemma} \label{alloc-behavior}
Given any resource allocation $\mathbf{A}(\cdot)$, there exists an equivalent fixed allocation $\mathbf{A}^{\textnormal{fixed}}(\cdot)$ where all the agents have the same costs/utilities.
\end{lemma}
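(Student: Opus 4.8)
The plan is to construct the fixed allocation by time-averaging the given allocation over each interval determined by the completion times, and then verify that this averaging preserves feasibility, the Leontief structure, and---crucially---every agent's completion time.

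First I would sort the agents by completion time under $\mathbf{A}(\cdot)$, writing $0 = t_0 \le t_1 \le \cdots \le t_n$, so that the half-open intervals $I_j = [t_{j-1}, t_j)$ partition $[0, t_n)$. By the very definition of these breakpoints, no agent completes strictly inside any $I_j$, so the set of agents still demanding resources is constant on each $I_j$. On each such interval I would define
\[
\mathbf{A}^{\textnormal{fixed}}_i \big|_{I_j} \;=\; \frac{1}{t_j - t_{j-1}} \int_{t_{j-1}}^{t_j} \mathbf{A}_i(t)\, dt,
\]
the time-average of the original allocation (and the zero allocation on $[t_n, \infty)$). Degenerate intervals arising from ties in completion times are simply dropped.

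Next I would check that $\mathbf{A}^{\textnormal{fixed}}(\cdot)$ is a valid allocation. Since $\mathbf{A}_i(t) = \lambda_i(t)\,\mathbf{d_i}$ with the demand vector $\mathbf{d_i}$ fixed, the integral pulls $\mathbf{d_i}$ out, so $\mathbf{A}^{\textnormal{fixed}}_i \big|_{I_j} = \bar\lambda_i^{(j)}\, \mathbf{d_i}$ for the scalar $\bar\lambda_i^{(j)} = \frac{1}{t_j - t_{j-1}}\int_{t_{j-1}}^{t_j}\lambda_i(t)\,dt \ge 0$, and the Leontief structure is preserved. Feasibility follows by exchanging sum and integral, $\sum_i A^{\textnormal{fixed}}_{ir} = \frac{1}{t_j - t_{j-1}}\int_{t_{j-1}}^{t_j}\sum_i A_{ir}(t)\,dt \le 1$, using the pointwise bound $\sum_i A_{ir}(t) \le 1$. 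Because the result is constant on each inter-completion interval, it is a fixed allocation by definition.

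The main work is to show costs (and hence the reciprocal utilities) are unchanged. I would track the cumulative intensity $\Lambda_i(t) = \int_0^t \lambda_i(s)\,ds$; an agent completes exactly when $\Lambda_i$ first reaches $k_i$. By construction the time-average preserves each subinterval integral, so the cumulative intensities of $\mathbf{A}^{\textnormal{fixed}}$ and $\mathbf{A}$ agree at every breakpoint $t_j$, and at $t_i$ this common value is exactly $k_i$, ruling out a later completion. The delicate point---and the main obstacle---is ruling out an \emph{earlier} completion: under $\mathbf{A}^{\textnormal{fixed}}$ the cumulative intensity is piecewise linear, so on each $I_j$ with $t_j \le t_i$ it interpolates linearly between the breakpoint values $\Lambda_i(t_{j-1})$ and $\Lambda_i(t_j)$, both of which are strictly below $k_i$ for $j < i$ since $\mathbf{A}$ does not complete agent $i$ before $t_i$; on the final active interval $I_i$ the average intensity is positive (as $\Lambda_i(t_{i-1}) < k_i = \Lambda_i(t_i)$), so the piecewise-linear cumulative stays strictly below $k_i$ throughout $[t_{i-1}, t_i)$ and attains $k_i$ only at $t_i$. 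Hence agent $i$'s completion time under $\mathbf{A}^{\textnormal{fixed}}$ is precisely $t_i$ for every $i$, giving the claimed equivalence.
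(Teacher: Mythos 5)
Your proposal is correct and follows essentially the same route as the paper's proof: average the given allocation over each inter-completion interval, pull out the demand vector to preserve the Leontief structure, exchange sum and integral for feasibility, and match cumulative work at the breakpoints to conclude equal completion times. Your explicit argument ruling out an \emph{earlier} completion under the piecewise-linear cumulative intensity is a point the paper's proof leaves implicit, and it is a welcome addition, but it does not change the overall approach.
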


Hence, without loss of generality, we will consider only fixed allocations for the remainder of the paper. 

\section{Dominant Resource Fairness with Work} \label{drf-w}
The DRF-W mechanism is an adaptation of the DRF mechanism~\cite{Ghodsi2011drf,parkes2015beyond} to our setting.
The DRF mechanism generalizes max-min fairness to settings with multiple goods and Leontief utilities.
It terms the resource satisfying, $r^{*} = \underset{r}{\arg \max} \; d_{ir}$ the {\em dominant resource} of an agent \textit{i}. The share $A_{ir^{*}}$ of dominant resource allocated to agent \textit{i} is its {\em dominant share}. The DRF mechanism chooses the allocation that equalizes the dominant shares of all the agents and is efficiently computable even at large scale~\cite{kash2018dc}.

The mechanism $f^{\textnormal{DRF-W}}$ initially allocates resources to all the agents using the same resource allocation policy used for DRF until one of the agents completes its job. After each agent finishes its work, ${f}^{\textnormal{DRF-W}}$ mechanism reruns DRF on the remaining agents. Thus, at time $t$ the DRF-W allocation assigns each agent $\lambda(t)$ of the agent's dominant resource and corresponding amounts of the other resources where $\lambda(t)$ is the solution to the linear program,
\begin{linenomath}
\begin{align*} \label{eq:1}
\begin{split}
\max \lambda(t) & 
\\ \textnormal{subject to} \underset{i \in \mathcal{N}(t)}{\displaystyle \sum} & \lambda(t) \cdot d_{ir} \leq 1, \quad \forall r \in \mathcal{R}.    
\end{split}
\end{align*}
\end{linenomath}
\noindent Here, $\mathcal{N}(t)$ denotes the set of agents that have not yet completed at time \textit{t}.

We illustrate the DRF-W mechanism with an example. 
\begin{example} \label{drf-example}
Consider a set of two agents where one agent requires 1 GB RAM (resource $r_1$) and 0.5 Mbs network bandwidth (resource $r_2$) and the other agent requires 0.25 GB RAM and 1 Mbs network bandwidth. Both agents have one unit of task to complete. In our notation this becomes 

\noindent $\mathbf{D} = \begin{bmatrix} 1 \quad \dfrac{1}{2} \\\dfrac{1}{4} \quad 1 \end{bmatrix}, \quad$
$\mathbf{K} = \begin{bmatrix} 1 \quad 0 \\0 \quad 1 \end{bmatrix}, \quad \textnormal{and}$
$\mathbf{W} = \begin{bmatrix} 1 \quad \dfrac{1}{2} \\\dfrac{1}{4} \quad 1 \end{bmatrix}$.\\[4 pt]
The DRF-W allocation at the first interval is obtained by solving the following linear program,
\begin{linenomath}
\begin{align*} 
\begin{split}
\max \quad & \lambda  
\\ \textnormal{subject to} \quad  & \lambda \cdot 1 + \lambda \cdot \dfrac{1}{4} \leq 1, \\ &  \lambda \cdot \dfrac{1}{2} + \lambda \cdot 1 \leq 1    
\end{split}
\end{align*}
\end{linenomath}

\noindent where $\lambda$ is the dominant share of both the agents in the first time period. Solving the above linear program gives $\lambda = \displaystyle  \dfrac{2}{3}$, and the DRF-W allocations as
\begin{linenomath}
\begin{align*} 
\begin{split}
\mathbf{A}^{\textnormal{DRF-W}}_{1}(t) & = \langle \dfrac{2}{3}, \; \dfrac{1}{3}\rangle, \quad \forall t \in [0, \dfrac{3}{2}] \\[1 pt]
\mathbf{A}^{\textnormal{DRF-W}}_{2}(t) & = \langle \dfrac{1}{6}, \; \dfrac{2}{3} \rangle, \quad \forall t \in [0, \dfrac{3}{2}]  
\end{split}
\end{align*}
\end{linenomath}

\noindent Both the agents complete at time $\displaystyle \dfrac{3}{2}$. Figure 1 illustrates the DRF-W allocation obtained in Example \ref{drf-example}.
\end{example}


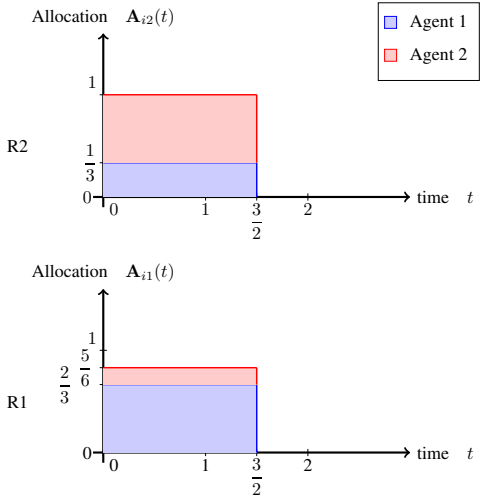
\begin{figure}[!ht]
 \scalebox{0.68}{\begin{tikzpicture}
\draw  node[anchor = west] at (-2, 1) {$\text{R1}$} ;
\draw[very thick, ->] (-0.2, 0) -- (6, 0) node[right] {$\text{time} \quad  t$} ;
\draw[very thick, ->] (0, -0.2) -- (0, 3.2) node[above] {$\text{Allocation} \quad \mathbf{A}_{i1}(t)$} ;
\draw node[anchor = north west] at (0, 0) {$\text{0}$}; 
\draw node[anchor =  east] at (-0.1, 0) {$\text{0}$};

\pgfplothandlermark{\pgfuseplotmark{|}}
\pgfplotstreamstart
\pgfplotstreampoint{\pgfpoint{0cm}{0cm}}
\pgfplotstreampoint{\pgfpoint{2cm}{0cm}}
\pgfplotstreampoint{\pgfpoint{3cm}{0cm}}
\pgfplotstreampoint{\pgfpoint{4cm}{0cm}}
\pgfplotstreamend

\pgfplothandlermark{\pgfuseplotmark{-}}
\pgfplotstreamstart
\pgfplotstreampoint{\pgfpoint{0cm}{0cm}}
\pgfplotstreampoint{\pgfpoint{0cm}{1.33cm}}
\pgfplotstreampoint{\pgfpoint{0cm}{1.66cm}}
\pgfplotstreampoint{\pgfpoint{0cm}{2cm}}
\pgfplotstreamend

\fill[fill= blue!20] (0, 0) rectangle (3, 1.33) ;
\draw[blue, thick] (0, 1.33) -- (3,1.33);
\draw[blue, thick] (3,1.33) -- (3,0);
\draw node[anchor = north] at (2, 0) {$\text{1}$}; 
\draw node[anchor = north] at (3, 0) {$\displaystyle \frac{3}{2}$};
\draw node[anchor = east] at (-0.5, 1.33) {$\displaystyle \dfrac{2}{3}$} ;

\fill[fill = red!20] (0, 1.33) rectangle (3, 1.66) ;
\draw[red, thick] (0, 1.66) -- (3, 1.66) ;
\draw[red, thick] (3, 1.66) -- (3, 1.33); 
\draw node[anchor = north] at (4, 0) {$\text{2}$};
\draw node[anchor = east] at (-0.1, 1.66) {$\displaystyle \frac{5}{6}$} ;
\draw node[anchor = south east] at (0, 2) {$1$} ;

\draw  node[anchor = west] at (-2, 6) {$\text{R2}$} ;
\draw[very thick, ->] (-0.2, 5) -- (6, 5) node[right] {$\text{time} \quad  t$} ;
\draw[very thick, ->] (0, 4.8) -- (0, 8.2) node[above] {$\text{Allocation} \quad \mathbf{A}_{i2}(t)$} ;
\draw node[anchor = north west] at (0,5) {$0$};
\draw node[anchor = east] at (-0.1,5) {$0$} ;

\pgfplothandlermark{\pgfuseplotmark{|}}
\pgfplotstreamstart
\pgfplotstreampoint{\pgfpoint{0cm}{5cm}}
\pgfplotstreampoint{\pgfpoint{2cm}{5cm}}
\pgfplotstreampoint{\pgfpoint{3cm}{5cm}}
\pgfplotstreampoint{\pgfpoint{4cm}{5cm}}
\pgfplotstreamend

\pgfplothandlermark{\pgfuseplotmark{-}}
\pgfplotstreamstart
\pgfplotstreampoint{\pgfpoint{0cm}{5cm}}
\pgfplotstreampoint{\pgfpoint{0cm}{5.67cm}}
\pgfplotstreampoint{\pgfpoint{0cm}{7cm}}
\pgfplotstreamend

\fill[fill= blue!20] (0, 5) rectangle (3, 5.67) ;
\draw[blue, thick] (0, 5.67) -- (3,5.67);
\draw[blue, thick] (3,5.67) -- (3,5);
\draw node[anchor = north] at (2,5) {$1$};
\draw node[anchor = east] at (0,5.67) {$\displaystyle \frac{1}{3}$};
\draw node[anchor = north] at (3,5) {$\displaystyle \frac{3}{2}$};
\draw node[anchor = north] at (4,5) {$2$};

\fill[fill = red!20] (0, 5.67) rectangle (3, 7) ;
\draw[red, thick] (0, 7) -- (3, 7) ;
\draw[red, thick] (3, 7) -- (3, 5.67); 

\draw node[anchor = south east] at (0, 7) {$1$};

\matrix[draw, below left] at (current bounding box.north east)
{\node at ++(-2,0.5) [rectangle, draw = blue, fill= blue!20] {};   \node at ++(-0.5,0.65) [] {Agent 1};\\
\node at ++(-2, 0.5) [rectangle, draw = red, fill = red!20]{};
\node at ++(-0.5, 0.65) [] {Agent 2};\\
} ;

\end{tikzpicture}}
\caption{DRF-W Allocation described in Example \ref{drf-example}}
\end{figure}


As discussed in the introduction, DRF-W is not Pareto optimal.  When compared to the LCP solution given in Example 2 in the next section, this example provides another illustration of this phenomenon.
However, DRF-W is weakly Pareto optimal, in that it is not possible for an alternate allocation to make all agents strictly better off.  The following theorem shows that it satisfies this as well as several other desiderata.

\begin{theorem} \label{drf-w_properties}
DRF-W satisfies weak-PO, SI, EF, and SP.
\end{theorem}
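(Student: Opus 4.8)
The plan is to extract a single structural fact about DRF-W and then read off all four properties from it. By construction the mechanism gives every agent still active at time $t$ the common dominant share $\lambda(t)$, i.e.\ $A_i(t)=\lambda(t)\mathbf{d_i}$, where $\lambda(t)=\min_{r}\big(\sum_{i\in\mathcal{N}(t)}d_{ir}\big)^{-1}$ is the largest feasible common rate. Two consequences I would record first: (i) since $\mathcal{N}(t)$ only shrinks as agents finish, $\lambda(t)$ is nondecreasing in $t$; and (ii) because the dominant resource $r^{*}$ of agent $i$ has $d_{ir^{*}}=1$, agent $i$ consumes its dominant resource at rate $\lambda(t)$ and needs $w_{ir^{*}}=k_i$ in total, so its completion time $t_i=c_i(A_i(\cdot))$ is the unique value solving $\int_0^{t_i}\lambda(s)\,ds=k_i$. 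In particular agents finish in increasing order of $k_i$, and, since $\lambda(t)$ is the LP maximum, at least one resource is saturated while any agent remains.

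For \textbf{SI}, note that $\sum_{i\in\mathcal{N}(t)}d_{ir}\le|\mathcal{N}(t)|\le n$ for every $r$ (each $d_{ir}\le1$), so $\lambda(t)\ge 1/n$ for all $t$. Hence $\int_0^{nk_i}\lambda(s)\,ds\ge k_i$, giving $t_i\le nk_i=c_i(\mathbf{A}_{\textnormal{SI}})$, which is exactly SI. For \textbf{EF}, I would evaluate what agent $i$ obtains from agent $j$'s profile $A_j(t)=\lambda(t)\mathbf{d_j}$ (which is $\mathbf{0}$ once $j$ finishes at $t_j$). Under Leontief preferences agent $i$'s progress rate on that bundle is $\lambda(t)\rho$ with $\rho:=\min_r (d_{jr}/d_{ir})$, and evaluating the $\min$ at $i$'s dominant resource $r^{*}$ gives $\rho\le d_{jr^{*}}\le 1$. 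Thus over $[0,t_j)$ agent $i$ accumulates at most $\rho\int_0^{t_j}\lambda=\rho k_j$ units of progress; if $\rho k_j<k_i$ it never finishes and $c_i(A_j)=\infty$, and otherwise its finish time $\tau$ obeys $\int_0^{\tau}\lambda=k_i/\rho\ge k_i=\int_0^{t_i}\lambda$, so $\tau\ge t_i$. Either way $c_i(A_i)\le c_i(A_j)$, i.e.\ EF (and since equal-$k_i$ agents with equal demands receive identical profiles, no randomization is needed here).

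For \textbf{weak-PO} I would argue by contradiction from the saturation fact. Suppose some $\mathbf{A}'(\cdot)$ strictly helps every agent, so every completion time strictly decreases and in particular the makespan drops below the DRF-W makespan $T$. The key step is to convert ``a resource is saturated at every instant'' into a conservation bound: each agent must consume exactly $w_{ir}=k_id_{ir}$ of resource $r$ to finish, so the total draw on any resource is fixed, while instantaneous draw is capped at the unit capacity. I would identify a resource that DRF-W keeps saturated over $[0,T)$ (tracking how the binding LP constraint moves as agents drop out) and compare cumulative consumption under $\mathbf{A}'$ and DRF-W on that resource; an allocation finishing strictly earlier for \emph{everyone} would have to overdraw the saturated resource on some interval, contradicting feasibility.

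The \textbf{SP} part is the main obstacle, and I would adapt the strategy-proofness argument for DRF. Fixing $k_i$, a misreport $\mathbf{d_i'}$ changes both the sequence of LPs (hence the trajectory $\lambda'(\cdot)$) and, crucially, decouples the \emph{allocated} bundle $\lambda'(t)\mathbf{d_i'}$ from agent $i$'s \emph{true} need $\mathbf{d_i}$: its genuine progress rate becomes $\lambda'(t)\min_r(d_{ir}'/d_{ir})$. The two competing effects are that understating demand on a loaded resource can raise $\lambda'$, while the factor $\min_r(d_{ir}'/d_{ir})\le 1$ simultaneously throttles true progress (and overstating demand merely wastes capacity and can only depress $\lambda'$ through added load). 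The heart of the proof is a monotonicity lemma showing these effects never net out in the agent's favor, i.e.\ the truthful cumulative progress $\int_0^t\lambda(s)\,ds$ dominates $\int_0^t\lambda'(s)\min_r(d_{ir}'/d_{ir})\,ds$ for all $t$, whence the truthful completion time is earliest. Establishing this domination \emph{across the dynamic reruns}, rather than for the static single-shot LP of ordinary DRF, is where I expect essentially all of the technical difficulty to lie.
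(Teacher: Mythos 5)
Your SI and EF arguments are correct and match the paper's. For SI you use exactly the paper's bound $\lambda(t)\ge 1/n$; for EF your computation via $\rho=\min_r(d_{jr}/d_{ir})\le d_{jr^*}\le 1$ and the comparison $\int_0^\tau\lambda = k_i/\rho\ge k_i=\int_0^{t_i}\lambda$ is in fact \emph{more} careful than the paper's informal claim that envy ``is only possible when agent $j$ has a strictly greater share of every resource.'' The genuine gap is SP. You correctly isolate the two coupled effects of a misreport (it changes the trajectory $\lambda'(\cdot)$ of the rerun LPs and throttles true progress by the factor $\min_r(d_{ir}'/d_{ir})\le 1$), and you correctly observe that the whole claim reduces to the domination $\int_0^t\lambda(s)\,ds\ge\int_0^t\lambda'(s)\min_r(d_{ir}'/d_{ir})\,ds$ across the dynamic reruns --- but you then stop and name this as an unproven ``monotonicity lemma.'' That lemma \emph{is} the content of SP here: the dynamic worry you raise (a misreport could push other agents out of $\mathcal{N}(t)$ sooner and raise $\lambda'$ in later intervals, netting an earlier true completion for $i$) is exactly what must be excluded. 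The paper dispatches this in one sentence by asserting that, since DRF is SP, any misreport weakly decreases $\mathbf{A}_i(t)$ at all times $t$; your write-up is more candid about where the difficulty sits, but a named, unproven lemma is not a proof, so the SP part of your proposal is incomplete.

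A second, smaller problem is the framing of weak-PO. Your plan to ``identify a resource that DRF-W keeps saturated over $[0,T)$'' and compare cumulative draw on it will not go through as stated: the binding constraint of the LP generally changes as agents drop out, so no single resource need be saturated over the whole horizon, and no single resource's aggregate demand $\sum_i k_i d_{ir}$ need equal the makespan (one can build three-agent, two-resource instances where the binding resource in the first interval differs from the last agent's dominant resource). The argument has to be run interval by interval, which is what the paper's induction does: on $[0,t_1)$ some resource is saturated by the agents that demand it, so those agents cannot all be strictly ahead of their DRF-W progress at $t_1$, and one then inducts over the remaining intervals; even there one must be careful to restrict attention to agents with positive demand for the saturated resource. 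Your conservation idea is the right ingredient, but as written the step ``overdraw the saturated resource on some interval'' is not yet an argument.
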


\section{Least Cost Product Mechanism} \label{LCP}

\noindent The Least Cost Product (LCP) mechanism chooses resource allocations such that the product of the costs for the allocations given to agents is the minimum. That is,
\begin{linenomath}
\begin{equation*}
    {f}^{\textnormal{LCP}}(\mathbf{W}) \in \underset{\mathbf{A}(\cdot)}{\arg\min} \; \underset{i \in \mathcal{N}}{\displaystyle \prod}  c_{i}(\mathbf{A}_{i}(\cdot)).
\end{equation*}
\end{linenomath}

\begin{example} \label{lcp-example}
We illustrate the LCP mechanism with the same example used in the previous section. Consider the same instances of $\mathbf{D}, \; \mathbf{K}, \; \textnormal{and} \; \mathbf{W}\;$ used in Example 1 described in the previous section. The completion time of the first agent is given by $t_{1} = \displaystyle \dfrac{k_{1}}{\lambda_{1,1}}$, where $\lambda_{1,1}$ is the dominant share of the first agent. The completion time of the second agent is expressed in terms of the work remaining for the second agent to complete after the completion of the first agent. In the upcoming subsection, we give a more detailed description of expressing the completion time of an agent in Equation (1). The cost of the second agent is $t_{2} = \displaystyle\dfrac{k_{2}}{\lambda_{2,2}} - \dfrac{k_{1}}{\lambda_{1,1}} \cdot \dfrac{\lambda_{2,1}}{\lambda_{2,2}} + \displaystyle \dfrac{k_{1}}{\lambda_{1,1}}$, where $\lambda_{2,1}$ and $\lambda_{2,2}$ are the dominant shares of the second agent at the first and second time intervals respectively.  The LCP allocation is obtained by solving the following optimization problem,
\begin{linenomath}
\begin{align*} 
\begin{split}
\min \quad & \dfrac{1}{\lambda_{1,1}} \cdot \left( 1 - \displaystyle \dfrac{\lambda_{2,1}}{\lambda_{1,1}} + \displaystyle \dfrac{1}{\lambda_{1,1}}\right)  
\\ \textnormal{subject to} \quad  & \lambda_{1,1} \cdot 1 + \lambda_{2,1} \cdot \dfrac{1}{4} \leq 1, \\ &  \lambda_{1,1} \cdot \dfrac{1}{2} + \lambda_{2,1} \cdot 1 \leq 1  
\end{split}
\end{align*}
\end{linenomath}

\noindent where we have substituted $\lambda_{2,2} = 1$ since there are only two agents. Solving the above optimization problem gives us $\lambda_{1,1} = \displaystyle \dfrac{6}{7}, \; \lambda_{2,1} = \displaystyle \dfrac{4}{7}$. The LCP allocations are as follows,
\begin{linenomath}
\begin{align*} 
\begin{split}
\mathbf{A}^{\textnormal{LCP}}_{1}(t) & = \langle \dfrac{6}{7}, \; \dfrac{3}{7} \rangle, \quad \forall t \in [0, \dfrac{7}{6}) \\[1.5 pt]
\mathbf{A}^{\textnormal{LCP}}_{2}(t) &  = \begin{cases}
\langle \dfrac{1}{7}, \; \dfrac{4}{7} \rangle, \quad \forall t \in [0, \dfrac{7}{6}) \\[6 pt]
\langle \dfrac{1}{4}, \; 1 \rangle, \quad \forall t \in [\dfrac{7}{6}, \dfrac{3}{2}] 
\end{cases}
\end{split}
\end{align*}
\end{linenomath}

Figure 2 illustrates the LCP allocation obtained in Example \ref{lcp-example}.
In comparison to the DRF-W allocation in Example 1, this example illustrates the opportunity for improving the welfare of one agent at no cost to the other.
\end{example}


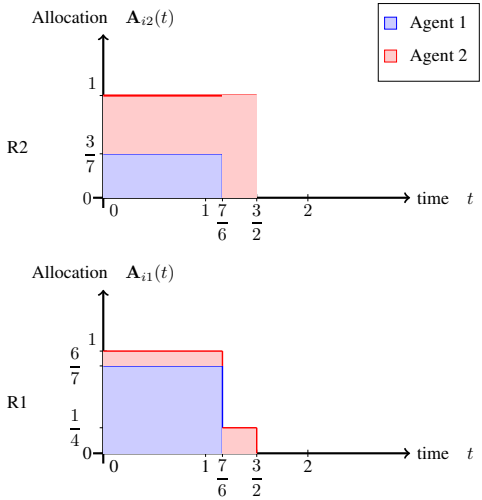
\begin{figure}[!ht]
 \scalebox{0.68}{\begin{tikzpicture}
\draw  node[anchor = west] at (-2, 1) {$\text{R1}$} ;
\draw[very thick, ->] (-0.2, 0) -- (6, 0) node[right] {$\text{time} \quad  t$} ;
\draw[very thick, ->] (0, -0.2) -- (0, 3.2) node[above] {$\text{Allocation} \quad \mathbf{A}_{i1}(t)$} ;
\draw node[anchor = north west] at (0, 0) {$\text{0}$}; 
\draw node[anchor =  east] at (-0.1, 0) {$\text{0}$};

\pgfplothandlermark{\pgfuseplotmark{|}}
\pgfplotstreamstart
\pgfplotstreampoint{\pgfpoint{0cm}{0cm}}
\pgfplotstreampoint{\pgfpoint{2cm}{0cm}}
\pgfplotstreampoint{\pgfpoint{2.33cm}{0cm}}
\pgfplotstreampoint{\pgfpoint{3cm}{0cm}}
\pgfplotstreampoint{\pgfpoint{4cm}{0cm}}
\pgfplotstreamend

\pgfplothandlermark{\pgfuseplotmark{-}}
\pgfplotstreamstart
\pgfplotstreampoint{\pgfpoint{0cm}{0cm}}
\pgfplotstreampoint{\pgfpoint{0cm}{0.5cm}}
\pgfplotstreampoint{\pgfpoint{0cm}{1.71cm}}
\pgfplotstreampoint{\pgfpoint{0cm}{2cm}}
\pgfplotstreamend

\fill[fill= blue!20] (0, 0) rectangle (2.33, 1.71) ;
\draw[blue, thick] (0, 1.71) -- (2.33, 1.71);
\draw[blue, thick] (2.33,1.71) -- (2.33,0);
\draw node[anchor = north] at (2, 0) {$\text{1}$}; 
\draw node[anchor = north] at (2.33, 0) {$\displaystyle \frac{7}{6}$};
\draw node[anchor = east] at (-0.3, 1.71) {$\displaystyle \dfrac{6}{7}$} ;

\fill[fill = red!20] (0, 1.71) rectangle (2.33, 2) ;
\draw[red, thick] (0, 2) -- (2.33, 2) ;
\draw[red, thick] (2.33, 2) -- (2.33, 1.71) ;
\draw node[anchor = south east] at (0, 2) {$1$} ;

\fill[fill = red!20] (2.33, 0) rectangle (3, 0.5) ;
\draw[red, thick] (2.33, 0.5) -- (3, 0.5) ;
\draw[red, thick] (3, 0.5) -- (3, 0) ;
\draw node[anchor = north] at (4, 0) {$\text{2}$};
\draw node[anchor = east] at (-0.3, 0.5) {$\displaystyle \dfrac{1}{4}$} ;
\draw node[anchor = north] at (3, 0) {$\displaystyle \frac{3}{2}$};

\draw  node[anchor = west] at (-2, 6) {$\text{R2}$} ;
\draw[very thick, ->] (-0.2, 5) -- (6, 5) node[right] {$\text{time} \quad  t$} ;
\draw[very thick, ->] (0, 4.8) -- (0, 8.2) node[above] {$\text{Allocation} \quad \mathbf{A}_{i2}(t)$} ;
\draw node[anchor = north west] at (0,5) {$0$};
\draw node[anchor = east] at (-0.1,5) {$0$} ;

\pgfplothandlermark{\pgfuseplotmark{|}}
\pgfplotstreamstart
\pgfplotstreampoint{\pgfpoint{0cm}{5cm}}
\pgfplotstreampoint{\pgfpoint{2cm}{5cm}}
\pgfplotstreampoint{\pgfpoint{2.33cm}{5cm}}
\pgfplotstreampoint{\pgfpoint{3cm}{5cm}}
\pgfplotstreampoint{\pgfpoint{4cm}{5cm}}
\pgfplotstreamend

\pgfplothandlermark{\pgfuseplotmark{-}}
\pgfplotstreamstart
\pgfplotstreampoint{\pgfpoint{0cm}{5cm}}
\pgfplotstreampoint{\pgfpoint{0cm}{5.86cm}}
\pgfplotstreampoint{\pgfpoint{0cm}{7cm}}
\pgfplotstreamend

\fill[fill= blue!20] (0, 5) rectangle (2.33, 5.86) ;
\draw[blue, thick] (0, 5.86) -- (2.33,5.86);
\draw[blue, thick] (2.33,5.86) -- (2.33,5);
\draw node[anchor = north] at (2,5) {$1$};
\draw node[anchor = east] at (0,5.86) {$\displaystyle \frac{3}{7}$};
\draw node[anchor = north] at (2.33,5) {$\displaystyle \frac{7}{6}$};
\draw node[anchor = north] at (4,5) {$2$};

\fill[fill = red!20] (0, 5.86) rectangle (2.33, 7) ;
\draw[red, very thick] (0, 7) -- (2.33, 7) ;
\draw[red, very thick] (2.33, 7) -- (3,7); 
\draw node[anchor = south east] at (0, 7) {$1$};

\fill[fill = red!20] (2.33, 5) rectangle (3, 7) ;
draw[red, very thick] (3, 7) -- (3, 5) ;
\draw node[anchor = north] at (3,5) {$\displaystyle \frac{3}{2}$};

\matrix[draw, below left] at (current bounding box.north east)
{\node at ++(-2,0.5) [rectangle, draw = blue, fill= blue!20] {};   \node at ++(-0.5,0.65) [] {Agent 1};\\
\node at ++(-2, 0.5) [rectangle, draw = red, fill = red!20]{};
\node at ++(-0.5, 0.65) [] {Agent 2};\\
} ;

\end{tikzpicture}}
\caption{LCP Allocation described in Example \ref{lcp-example}}
\end{figure}

Before characterizing the behavior of LCP, we observe that it satisfies Pareto optimality and sharing incentives.

\begin{theorem} \label{LCP-po}
The LCP mechanism satisfies PO.
\end{theorem}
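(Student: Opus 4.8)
\noindent The plan is to argue by contradiction, exploiting the fact that the product objective is strictly increasing in each agent's cost. Suppose $\mathbf{A}^{\textnormal{LCP}}(\cdot)$ is an LCP allocation that is \emph{not} Pareto optimal. By the definition of PO there is then a feasible alternative allocation $\mathbf{A}'(\cdot)$ together with some agent $j \in \mathcal{N}$ such that $c_{j}(\mathbf{A}'_{j}(\cdot)) < c_{j}(\mathbf{A}^{\textnormal{LCP}}_{j}(\cdot))$, while no agent is worse off, i.e. $c_{i}(\mathbf{A}'_{i}(\cdot)) \le c_{i}(\mathbf{A}^{\textnormal{LCP}}_{i}(\cdot))$ for every $i \in \mathcal{N}$.

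The first thing I would record is that every cost is strictly positive and finite. Since each agent $i$ has a positive amount of work $\mathbf{w}_{i} = k_{i}\mathbf{d}_{i} \neq \mathbf{0}$ to complete and the instantaneous allocations are bounded, the completion time $c_{i}(\cdot)$, defined as the $t_{i}$ solving $\mathbf{w}_{i} = \int_{0}^{t_{i}} \mathbf{A}_{i}(t)\,dt$, satisfies $0 < c_{i}(\cdot) < \infty$. (Any agent with $k_{i}=0$ has zero cost under every allocation and may be dropped from the product without changing the argument.)

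The comparison of objective values is then immediate: all factors obey $0 < c_{i}(\mathbf{A}'_{i}(\cdot)) \le c_{i}(\mathbf{A}^{\textnormal{LCP}}_{i}(\cdot))$, and the $j$-th factor is a strict inequality, so multiplying these positive numbers over $\mathcal{N}$ yields $\prod_{i} c_{i}(\mathbf{A}'_{i}(\cdot)) < \prod_{i} c_{i}(\mathbf{A}^{\textnormal{LCP}}_{i}(\cdot))$. This contradicts the defining property of $\mathbf{A}^{\textnormal{LCP}}$ as a minimizer of the product of costs over all feasible allocations. Hence no such $\mathbf{A}'(\cdot)$ can exist, and $\mathbf{A}^{\textnormal{LCP}}(\cdot)$ is Pareto optimal.

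The proof is short precisely because a product of positive reals is strictly monotone in each argument, so minimizing it automatically lands on the Pareto frontier of the cost vectors. I do not expect a genuine obstacle here; the only points needing care are the strict positivity of every cost (which is what lets the single strict inequality propagate through the product) and the observation that the feasible set over which the PO definition ranges coincides with the feasible set over which the LCP minimization is taken, so that $\mathbf{A}'(\cdot)$ is a legitimate competitor to $\mathbf{A}^{\textnormal{LCP}}(\cdot)$ in the $\arg\min$.
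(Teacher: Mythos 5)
Your proof is correct and takes essentially the same route as the paper's: assume a Pareto-improving alternative exists, observe that strict positivity of costs makes the product strictly smaller, and contradict the minimality of the LCP solution. The only difference is that you make the positivity of each cost explicit, which the paper leaves implicit.
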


\begin{proof}
Suppose for contradiction the allocation output $\mathbf{A}^{\textnormal{LCP}}(\cdot)$ from the LCP mechanism is not PO. Then, there exists an alternative allocation $\mathbf{A}^{'}(\cdot)$ that decreases the cost of at least one agent without increasing the cost of any other agent. This implies $\underset{i \in \mathcal{N}}{\displaystyle \prod}  c_{i}(\mathbf{A}^{'}_{i}(\cdot)) < \underset{i \in \mathcal{N}}{\displaystyle \prod}  c_{i}(\mathbf{A}^{\textnormal{LCP}}_{i}(\cdot))$ which contradicts the optimality of the LCP solution.
\end{proof}

\begin{theorem}
\label{thm:SI}
The LCP mechanism satisfies SI.
\end{theorem}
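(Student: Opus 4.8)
The plan is to argue by contradiction, reducing the per-agent guarantee $c_i(\mathbf{A}^{\textnormal{LCP}}_i(\cdot)) \le n k_i$ to a statement about optimality of the cost product. First I would record two basic facts about the benchmark. The reduced equal-split allocation that gives agent $i$ the constant bundle $\tfrac{1}{n}\mathbf{d_i}$ is feasible, since $\sum_{i\in\mathcal{N}}\tfrac1n d_{ir}\le\tfrac1n\sum_{i}1=1$ for every $r$, using $d_{ir}\le 1$; and under it agent $i$ receives dominant share $\tfrac1n$ at all times, so its completion time solves $k_i=\int_0^{t}\tfrac1n\,ds$, giving exactly $c_i(\mathbf{A}_{\textnormal{SI}})=n k_i$. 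By Lemma \ref{alloc-behavior} I may take the LCP output to be a fixed allocation with costs $c^*_i=c_i(\mathbf{A}^{\textnormal{LCP}}_i(\cdot))$.

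Now suppose, for contradiction, that some agent $j$ has $c^*_j>n k_j$. Because $\mathbf{A}_{\textnormal{SI}}$ is itself feasible, optimality of LCP already gives $\prod_{i} c^*_i\le\prod_{i} n k_i$, but this aggregate inequality is not enough: I need a per-coordinate bound. The key step is to show that the modified cost vector obtained from $c^*$ by lowering only the $j$-th coordinate to $n k_j$ is still achievable, i.e.\ that there is a feasible allocation under which agent $j$ completes by $n k_j$ while every other agent still completes by its LCP time $c^*_i$. Given such a witness, its cost product is strictly smaller than $\prod_{i} c^*_i$ (one factor strictly decreases, the rest are unchanged), contradicting the definition of $f^{\textnormal{LCP}}$ and finishing the proof.

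To build the witness I would start from the LCP schedule and overlay agent $j$'s equal share $\tfrac1n\mathbf{d_j}$ on $[0,n k_j]$, compensating the remaining agents using the slack that is present precisely because $j$ is served so slowly. Equivalently, and more conceptually, this is the proportionality property of a competitive equilibrium with equal incomes: writing $u_i=k_i/c_i$, the LCP rule maximizes $\prod_{i} u_i$, and the first-order characterization of the cost-product program, the same critical-point structure on the efficiency frontier noted for competitive division of bads, should supply supporting prices on resource-time at which all agents have equal budgets. Under those prices each agent's equal-split bundle is exactly affordable, so its chosen (LCP) bundle is weakly preferred to it, which is exactly $c^*_i\le n k_i$.

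The main obstacle is the cross-agent and temporal coupling in constructing the witness: speeding up agent $j$ in general requires taking resources from other agents, and because completion time is only quasiconvex (not convex) in the allocation, the achievable cost region is not convex, so one cannot simply average $\mathbf{A}^{\textnormal{LCP}}$ with $\mathbf{A}_{\textnormal{SI}}$ and read off a per-agent bound. I expect the cleanest way around this is to lean on the first-order optimality/price characterization of LCP (Equation~(1) and the surrounding structural analysis) to guarantee that the slow completion of $j$ really does correspond to unused capacity that can be redirected to $j$ without delaying anyone else, making the witness allocation feasible.
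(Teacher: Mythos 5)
There is a genuine gap, and it sits exactly at the step you flag as the ``key step.'' You propose to derive a contradiction by exhibiting a feasible allocation in which agent $j$ finishes by $nk_j$ while \emph{every other agent still finishes by its LCP time}. Such a witness would Pareto-dominate the LCP allocation, and the LCP allocation is Pareto optimal (Theorem~\ref{LCP-po}); so the object you are trying to construct cannot exist, independently of whether SI holds. Your own sketch reflects the problem: the ``slack that is present precisely because $j$ is served so slowly'' need not exist --- in a non-wasteful Pareto-optimal schedule some resource is saturated at all times, so any speedup of $j$ must delay someone. The fallback via supporting prices / CEEI proportionality is also not available here: the paper establishes no price characterization of LCP, and the failure of envy-freeness (Lemma~\ref{lcp-envy}) is precisely evidence that the usual Eisenberg--Gale/CEEI machinery, from which proportionality would come for free, does not transfer to this setting.

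The paper's proof accepts that other agents must be delayed and wins on the trade-off. With agents ordered by completion time, it takes $\ell$ to be the largest index $j$ with $t_i - t_j > (n-j)k_i$, inserts a block of length $k_i$ right after time $t_\ell$ in which agent $i$ monopolizes all resources (so $i$ finishes by $t_\ell + k_i$ while each later agent $j \neq i$ finishes at $t_j + k_i$), and then compares logarithms: the total loss $\sum_{j>\ell,\, j\neq i}\int_{t_j}^{t_j+k_i} t^{-1}\,dt$ is dominated term-by-term by the gain $\int_{t_\ell+k_i}^{t_i} t^{-1}\,dt$, where the maximality of $\ell$ is used to lower-bound the gain by a sum of $n-\ell-1$ integrals over length-$k_i$ intervals whose left endpoints are strictly smaller than the corresponding $t_j$. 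You correctly diagnose the obstacle (cross-agent and temporal coupling, non-convexity of the achievable cost region), but neither of your proposed resolutions closes it; some version of the paper's ``delay everyone a little, gain more on $i$'' accounting appears to be unavoidable.
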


\begin{proof}
Suppose for contradiction the LCP allocation violates SI for some agent $i$ who finishes at time $t_i > n k_i$. 
Let $\ell$ be the maximum value of $j$ such that $t_i - t_j > (n - j)k_i$.  (This is well defined as $j=0$ satisfies it).
Consider instead the alternative allocation where after the $\ell$th agent finishes we insert a period of length $k_i$ where agent $i$ has all the resources, which guarantees that agent $i$ finishes no later than $t_\ell + k_i$ but agents $j > \ell$ other than $i$ finish at time $t_j + k_i$ instead of $t_j$.  We show that this allocation improves the cost product, yielding the desired contradiction.

Taking the logarithm of the objective, the effect of making agent $j$ finish later is to increase it by $\log(t_j+k_i) - \log(t_j) = \int_{t_j}^{t_j+k_i} t^{-1} dt$.  The total increase is then
$\sum_{j > \ell, j \neq i} \int_{t_j}^{t_j+k_i} t^{-1} dt$.
The effect of making agent $i$ finish earlier is to decrease it by $\log(t_i) - \log(t_\ell + k_i) = \int_{t_\ell + k_i}^{t_i} t^{-1} dt > \int_{t_\ell + k_i}^{t_\ell+(n-\ell)k_i} t^{-1} dt = \sum_{\delta = 1}^{n-\ell-1}\int_{t_\ell + \delta k_i}^{t_\ell+(\delta+1)k_i} t^{-1} dt$, where the inequality follows from the definition of $\ell$.  Both sums have $n-\ell-1$ terms so we show that the lower bounds of the integration in the sum for $i$ are always strictly lower than those in the sum for the $j\neq i$, which in turn shows that the reduction in the log objective from $i$ completing earlier outweighs the increase from the $j \neq i$ completing later.

If $j > i$, then $t_j \geq t_i > t_\ell + \delta k_i$ as desired regardless of which $\delta$ corresponds to $j$.  If $i > j > \ell$, then since $j$ was not chosen as the value of $\ell$, $t_j - t_\ell > (j-\ell)k_i$.  Rewriting yields $t_j > t_\ell + (j-\ell)k_i$ as desired.
\end{proof}

\subsection{Structure of LCP Allocations}

In this subsection, we characterize the structure of LCP allocations. We show that, apart from the possibility of ties, they consist of allocations where on each time interval the allocation is an extreme point of the Pareto frontier.  For most instances, this suffices to reduce finding the exact allocation to examining a finite number of cases.  This characterization is used for several of our subsequent results and forms the basis of our simulations.

Let $\mathcal{N} = \{1,2,...,n\}$ denote the set of agents where the agents are numbered as per their completion times under the LCP mechanism, that is, agent 1 finishes first, agent 2 second, and so on. The cost product of the agents under the LCP mechanism is expressed as,

\begin{linenomath}
\begin{equation*}
    CP(\mathbf{A}^{\textnormal{LCP}}(\cdot)) = \displaystyle \prod_{i \in \mathcal{N}} t_{i}
\end{equation*}
\end{linenomath}

\noindent where $t_{i}$ denotes the completion time of agent $i \in \mathcal{N}$. Let $\lambda_{i,j}$ denote the dominant share allocated to agent \textit{i} during the time interval where agent \textit{j} completes its work. The completion time of agent 1 is given by,
$t_{1} = \displaystyle \frac{k_{1}}{\lambda_{1,1}}$, and for agents $i > 1$ the completion time is expressed recursively as,
\begin{linenomath}
\begin{equation*}
    t_{i} = \displaystyle \frac{k_{i} - \displaystyle \sum^{i -1}_{j = 1} ((t_{j} - t_{j-1}) \cdot \lambda_{i,j})}{\lambda_{i,i}} + t_{i-1} \quad \textnormal{(1)}.
\end{equation*}
\end{linenomath}

\noindent For example, the completion time of agent 2 is given by,
\begin{linenomath}
\begin{equation*}
    t_{2} = \displaystyle \frac{k_{2} - t_{1} \cdot \lambda_{2,1}}{\lambda_{2,2}} + t_{1} = \; \frac{k_{2} - \displaystyle \frac{k_{1}}{\lambda_{1,1}} \cdot \lambda_{2,1}}{\lambda_{2,2}} + \frac{k_{1}}{\lambda_{1,1}}.
\end{equation*}
\end{linenomath}

\noindent Now, we analyze the cost product of the LCP mechanism $CP(\mathbf{A}^{\textnormal{LCP}}(\cdot))$ by studying the cost product as a function of the parameters of the \textit{i}-th time interval, $CP_i(\lambda_{i,i}, \; \lambda_{i+1,i}, \; \lambda_{i+2, i}, \; \cdots, \lambda_{n,i})$ while taking the other allocation parameters as constants.
That is, $CP_i(\lambda) = CP(\mathbf{A}^\lambda)$, where $\mathbf{A}^\lambda$ is the modified LCP allocation with the $\lambda$ parameters for the $i$-th time interval.

\begin{lemma} \label{cost-prod1}
$CP_i(\lambda_{i,i},\; \lambda_{i+1,i}, \; \lambda_{i+2, i}, \; \cdots, \lambda_{n,i})$ is a Quasiconcave function.
\end{lemma}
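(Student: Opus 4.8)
The plan is to obtain a clean closed form for the completion times as functions of the interval-$i$ shares, and then recognize the cost product as a nondecreasing power of a ratio of a concave function to a linear one. Throughout I hold the combinatorial structure fixed: I assume the perturbed shares preserve the completion order $1,2,\dots,n$, so that the recurrence (1) remains valid and the factors of the product are well defined.

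First I would set $x := \lambda_{i,i}$ and collect the remaining free variables of the $i$-th interval into $\lambda' := (\lambda_{i+1,i},\dots,\lambda_{n,i})$. Since agents $1,\dots,i-1$ have already completed, only the factors $t_i,\dots,t_n$ of the product depend on $(x,\lambda')$, while $\prod_{l<i} t_l$ is a positive multiplicative constant. For $l \ge i$ let $R_l$ be the work of agent $l$ still outstanding at the end of interval $i$. Because the shares of every interval after the $i$-th are held fixed, unrolling (1) expresses each later interval length, and hence each difference $t_l - t_i$, as a fixed \emph{linear} function of $R_{i+1},\dots,R_l$. Substituting $R_l = \hat k_l - (W^{\mathrm{rem}}_i/x)\,\lambda_{l,i}$, where $\hat k_l$ is the (fixed) work of agent $l$ remaining at the start of interval $i$ and $\tau_i = W^{\mathrm{rem}}_i/x$ is the length of interval $i$, collapses everything to
\[
  t_l \;=\; \gamma_l \;+\; \frac{\ell_l(\lambda')}{x}\;=\;\frac{g_l(x,\lambda')}{x},
  \qquad g_l(x,\lambda') := \gamma_l x + \ell_l(\lambda'),
\]
where $\gamma_l$ is a constant, $\ell_l$ is affine in $\lambda'$, and hence each $g_l$ is affine in $(x,\lambda')$ and strictly positive on the feasible region (since there $t_l>0$ and $x>0$). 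Establishing this closed form, and in particular verifying that the interval-$i$ variables enter only through $1/x$ and the scaled shares $\lambda_{l,i}/x$, is the main piece of work.

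With the closed form in hand the rest is structural. Writing $N := n-i+1$ for the number of agents alive in interval $i$, I have $CP_i(x,\lambda') = \mathrm{const}\cdot\prod_{l=i}^{n}(g_l/x) = \mathrm{const}\cdot \psi(x,\lambda')^{N}/x^{N}$, where $\psi := \big(\prod_{l=i}^{n} g_l\big)^{1/N}$ is the geometric mean of the affine functions $g_l$. The geometric mean is concave and nondecreasing on the positive orthant, so $\psi$ is concave and positive. It then suffices to prove the ratio $\psi/x$ is quasiconcave, since $CP_i = \mathrm{const}\cdot(\psi/x)^{N}$ and $t\mapsto \mathrm{const}\cdot t^{N}$ is nondecreasing on $[0,\infty)$, which preserves quasiconcavity.

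Finally I would check quasiconcavity of $\psi/x$ directly through its superlevel sets on the (convex) domain $\{x>0,\ \lambda'\ge 0,\ g_l>0\ \forall l\}$. For $c \ge 0$ we have $\{\psi/x \ge c\} = \{\psi - cx \ge 0\}$, and since $\psi$ is concave and $cx$ is linear, $\psi - cx$ is concave, so this superlevel set is convex; for $c<0$ the set is the whole domain, because $\psi/x>0$ everywhere. Hence every superlevel set is convex, $\psi/x$ is quasiconcave, and therefore so is $CP_i$. The only genuine difficulty is the bookkeeping in the first step; once $t_l = g_l/x$ with $g_l$ affine is in place, the concave-over-linear argument is immediate.
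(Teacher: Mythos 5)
Your proof is correct, and it reaches the conclusion by a genuinely different route than the paper for the key analytic step. Both arguments begin the same way: you establish that each completion time satisfies $t_l = g_l/\lambda_{i,i}$ with $g_l$ affine in the interval-$i$ shares, which is exactly the paper's inductive claim that $u_l = t_l\cdot\lambda_{i,i}$ is affine. From there the paths diverge. The paper restricts $CP_i$ to a line segment between two valid allocations, obtaining a univariate product of ratios of affine functions, takes its logarithm, and then proves quasiconcavity of that univariate function in a separate lemma via a rather heavy computation: the derivative's numerator is expanded as a polynomial whose coefficients are written in elementary symmetric functions, and sign-change control is obtained from Descartes' rule of signs together with Newton's inequalities. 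You instead observe that $\prod_{l\ge i} t_l = \bigl(\psi/x\bigr)^{N}$ up to a positive constant, where $\psi$ is the geometric mean of the positive affine functions $g_l$; concavity of the geometric mean under affine substitution, the standard fact that a nonnegative concave function over a positive linear one is quasiconcave (its superlevel sets are $\{\psi - cx\ge 0\}$ with $\psi-cx$ concave), and closure of quasiconcavity under the nondecreasing map $t\mapsto t^{N}$ then finish the argument. Your route buys a substantially shorter and more conceptual proof that avoids the paper's Lemma~3 entirely and needs no sign normalization of the coefficients $a_j$; the paper's line-restriction technique is more hands-on but yields the explicit univariate form that it reuses when describing where $CP_i$ can change monotonicity. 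The only caveats — that $\lambda_{i,i}>0$, that the $g_l$ are positive on the feasible region, and that the completion order is held fixed so the recurrence applies — are assumptions the paper's proof makes implicitly as well, and you state them explicitly.
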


\begin{proof}
First, we show that the product of the completion time of an agent with the dominant share allocated to the \textit{i}-th completing agent is affine. Then, we apply this property to transform the cost product to a univariate function. Finally, we show this univariate function is Quasiconcave.

Without loss of generality, when we analyze $CP_i(\lambda_{i,i},\; \lambda_{i+1,i}, \; \lambda_{i+2, i}, \; \cdots, \lambda_{n,i})$ we assume that agents $\{1, 2, , \cdots, i -1\}$ have completed their tasks resulting in completion times of agents $t_{1}, t_{2}, \cdots, t_{i-1}$ to be regarded as constants. In the $i$-th time interval, an updated tasks matrix is obtained as $\mathbf{K}^{'}$ where $k^{'}_{j} = 0$ for agents $1 \leq j \leq i-1$ since the tasks for set of agents $\{1, 2, \cdots, i-1\}$ have already been completed, and $0 \leq k^{'}_{j} \leq k_{j}$ for agents $i \leq j\leq n $.
That is, agents in $\mathcal{N}^{\;'} = \{i, i+1, \cdots, n\}$ have not completed their tasks but may have made some progress in earlier intervals.

Let $u_{j} = t_{j} \cdot \lambda_{i,i}$. We show by induction that $u_{j}$ is affine. In the base case we have $j = i$, for which $u_{i} =  t_{i} \lambda_{i,i} \; = k^{'}_i + \lambda_{i,i}t_{i-1}$. This is affine as $t_{i-1}$ is a constant.
Assume $u_{i}, u_{i+1}, \cdots, u_{j-1}$ are affine functions of $\lambda_{i,i}; \; \lambda_{i,i}, \lambda_{i+1,i}; \; \cdots; \; \lambda_{i,i}, ...,  \lambda_{j-1,i}$ respectively. We now show that $u_{j}$ is an affine function of $\lambda_{i,i}, \lambda_{i+1,i}, ..., \lambda_{j-1,i}, \lambda_{j,i}$.

\begin{linenomath}
\begin{align*}
\begin{split}
u_{j} & = t_{j} \cdot \lambda_{i,i} \\
& = \displaystyle \left( \frac{k^{'}_{j} - \displaystyle \sum_{a = i}^{j-1} (t_{a} - t_{a-1}) \cdot \lambda_{j,a}}{\lambda_{j,j}} + t_{j-1} \right) \cdot \lambda_{i,i}\\
& = \displaystyle \frac{k^{'}_{j}}{\lambda_{j,j}} \lambda_{i,i} - \frac{1}{\lambda_{j,j}} \displaystyle \sum_{a = i}^{j-1} (u_{a} - u_{a-1})\cdot \lambda_{j,a} + u_{j-1} \\
& = \displaystyle \frac{k^{'}_{j}}{\lambda_{j,j}} \lambda_{i,i} - \frac{k^{'}_{i}}{\lambda_{j,j}} \lambda_{j, i} - \frac{\displaystyle \sum_{a = i + 1}^{j-1} (u_{a} - u_{a-1}) \lambda_{j,a}}{\lambda_{j,j}}  + u_{j-1} \\
\end{split}
\end{align*}
\end{linenomath}

\noindent The first term is linear with respect to $\lambda_{i,i}$. The second term is linear with respect to $\lambda_{j,i}$. The last term of the expression is $u_{j-1}$ which is affine with respect to $\lambda_{i, i}, \lambda_{i+ 1,i}, \cdots , \lambda_{j-1, i}$. The expression in the third term, results in a linear combination of $u_{i}, u_{i+1}, \cdots, u_{j-1}$. Hence, $u_{j}$ is an affine function of $\lambda_{i,i}, \lambda_{i+1,i}, \cdots ,\lambda_{j,i}$. 

Let us call $\boldsymbol{{\lambda}} = (\lambda_{i,i}, \lambda_{i+1,i}, ... ,\lambda_{n,i})$ a valid allocation of $n - i + 1$ agents in the $i$-th time interval if $\lambda_{j,i} \geq 0$ for all $1 \leq i \leq j \leq n$, and $ \displaystyle \sum_{j \in \mathcal{N}^{'}} \lambda_{j,i} \cdot d_{j,r} \leq 1$ for all $r \in \mathcal{R}$. Let $\boldsymbol{\lambda}$ and $\boldsymbol{\lambda}^{'}$ represent any two valid allocations of the $n - i+1 $ agents in the $i$-th time interval. We transform the cost product during the $i$-th time interval into a univariate function $f_{\textnormal{CP}_{i}}(\theta)$ for $0 \leq \theta \leq 1$. Let,
\\$f_{\textnormal{CP}_{i}}(\theta) = CP_i(\theta \cdot \boldsymbol{\lambda} + (1 - \theta) \cdot \boldsymbol{\lambda}^{'})\quad$; or \\[2 pt]
$f_{\textnormal{CP}_{i}}(\theta) =  \displaystyle \prod_{j \in \mathcal{N}^{'}} \displaystyle \frac{u_{j}(\theta \cdot \boldsymbol{\lambda} + (1 - \theta) \cdot \boldsymbol{\lambda}^{'})}{(\theta \cdot \lambda_{i,i} + (1 - \theta) \cdot \lambda^{'}_{i,i}) }$ \\

\noindent Now, $CP_i(\boldsymbol{\lambda})$ is Quasiconcave if and only if the univariate function $f_{\textnormal{CP}_{i}}(\theta)$ is Quasiconcave for all such $\boldsymbol{\lambda}$ and $\boldsymbol{\lambda}^{'}$. Since, $u_{j}$ is an affine function of $\boldsymbol{\lambda}$,

\begin{linenomath}
\begin{equation*} \label{cp_univariate_eq1}
\begin{split}
f_{\textnormal{CP}_{i}}(\theta) & = \displaystyle \prod_{j \in \mathcal{N}^{'}} \displaystyle \frac{ \theta u_{j}(\boldsymbol{\lambda}) +  u_{j}(\boldsymbol{\lambda}^{'}) - \theta u_{j}(\boldsymbol{\lambda}^{'})}{\theta \lambda_{i,i} +  \lambda^{'}_{i,i} - \theta \lambda^{'}_{i,i}} \\
& = \displaystyle \prod_{j \in \mathcal{N}^{'}} \displaystyle \frac{\theta a_{j} + b_{j}}{\theta a_{0} + b_{0}} \quad \textnormal{(2)}
\end{split}
\end{equation*}
\end{linenomath}

\noindent where the following substitutions are made at Equation (2), $a_{j} = u_{j}(\boldsymbol{\lambda}) - u_{j}(\boldsymbol{\lambda}^{'})\;$, $b_{j} = u_{j}(\boldsymbol{\lambda}^{'})\;$, $a_{0} = \lambda_{i,i} - \lambda^{'}_{i,i}\;$, and $b_{0} = \lambda^{'}_{i,i}$. By construction, $b_{j}, b_{0}\geq 0$. Without loss of generality, we assume $a_{j} \geq 0$.

We analyze the logarithm of $f_{\textnormal{CP}_{i}}(\theta)$ instead of $f_{\textnormal{CP}_{i}}(\theta)$ since Quasiconcavity is preserved under monotonic transformations. Lemma \ref{cost-prod2} below shows $\log(f_{\textnormal{CP}_{i}}(\theta))$ is Quasiconcave, so $CP_{i}(\boldsymbol{\lambda})$ is a Quasiconcave function.
\end{proof}

\begin{lemma} \label{cost-prod2}
Let $f(\theta) = \displaystyle \prod_{j \in \mathcal{N}^{'}} \displaystyle \frac{\theta a_{j} + b_{j}}{\theta a_{0} + b_{0}}$.  Suppose $f(\theta) > 0$ for $\theta \in [0,1]$; $a_{0}, b_{0} \geq 0$; and $b_{j} \geq 0$ for $j \in \mathcal{N}^{'}\subseteq \mathcal{N}$. Then, $\log(f(\theta))$ is Quasiconcave on $[0,1]$.
\end{lemma}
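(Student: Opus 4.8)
The plan is to pass to $g(\theta) = \log f(\theta)$ and show it is quasiconcave. This is legitimate because $f(\theta) > 0$ on $[0,1]$ forces the denominator $\theta a_0 + b_0$ and every numerator $\theta a_j + b_j$ to be nonzero throughout $[0,1]$, so $g$ is $C^2$ on a neighborhood of $[0,1]$. Writing $N = |\mathcal{N}'|$ and $g(\theta) = \sum_{j \in \mathcal{N}'}\log|\theta a_j + b_j| - N\log(\theta a_0 + b_0)$, I would introduce the shorthand $x_j = \frac{a_j}{\theta a_j + b_j}$ and $x_0 = \frac{a_0}{\theta a_0 + b_0}$, which gives $g'(\theta) = \sum_{j} x_j - N x_0$ and $g''(\theta) = N x_0^2 - \sum_{j} x_j^2$. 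The reduction I rely on is standard: a $C^2$ function on an interval is quasiconcave as soon as $g''(\theta^*) < 0$ at every critical point $\theta^*$, because between two strict local maxima the function would attain an interior minimum, which is itself a critical point and hence (by hypothesis) a strict local maximum — a contradiction; so there is at most one critical point, and $g$ is monotone or single-peaked.

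The heart of the argument is the second-order test at critical points, and I expect this to be the clean part. When $g'(\theta^*) = 0$ the critical equation is $x_0 = \frac{1}{N}\sum_{j} x_j$, i.e. $x_0$ is the average of the $x_j$. Cauchy--Schwarz (equivalently QM--AM) then gives $N x_0^2 = \frac{1}{N}\big(\sum_{j} x_j\big)^2 \le \sum_{j} x_j^2$, so that $g''(\theta^*) = N x_0^2 - \sum_{j} x_j^2 \le 0$.

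The main obstacle is that Cauchy--Schwarz yields only $g'' \le 0$, whereas the reduction needs the strict inequality. To close this gap I would use the algebraic identity $x_j - x_0 = \frac{a_j b_0 - a_0 b_j}{(\theta a_j + b_j)(\theta a_0 + b_0)}$, whose numerator $c_j := a_j b_0 - a_0 b_j$ is a constant independent of $\theta$. Equality in Cauchy--Schwarz at $\theta^*$ means all the $x_j$ coincide, and being equal to their own average they all equal $x_0$; hence every $x_j - x_0$ vanishes at $\theta^*$, which forces $c_j = 0$ for all $j$. Since the $c_j$ are constants, $c_j = 0$ for all $j$ makes $x_j - x_0 \equiv 0$ and therefore $g' \equiv 0$. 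This dichotomy settles the proof: if some $c_j \ne 0$, then at every critical point the $x_j$ cannot all coincide, Cauchy--Schwarz is strict, $g''(\theta^*) < 0$, and the reduction gives quasiconcavity; if all $c_j = 0$, then $g$ is constant and hence trivially quasiconcave. In both cases $\log f$ is quasiconcave on $[0,1]$.
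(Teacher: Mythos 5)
Your proof is correct, and it takes a genuinely different route from the paper's. The paper works directly with the sign pattern of the numerator of $g'(\theta)$, which (after substituting $r_j = b_j/a_j$) is a polynomial whose coefficients it expresses through elementary symmetric functions; it then shows via Newton's inequalities that the coefficient bounds $s_k$ are monotone, so the coefficient sequence has at most one sign change, and invokes Descartes' rule of signs to conclude $g'$ crosses zero at most once, from positive to negative (with a separate, easy case for $a_0=0$, where $g$ is outright concave). You instead use the second-order sufficient condition for quasiconcavity of a $C^2$ function on an interval: writing $x_j = a_j/(\theta a_j + b_j)$ gives $g' = \sum_j x_j - Nx_0$ and $g'' = Nx_0^2 - \sum_j x_j^2$, and at a critical point $x_0$ is the mean of the $x_j$, so Cauchy--Schwarz yields $g''\le 0$; your treatment of the equality case via the $\theta$-independent constants $c_j = a_jb_0 - a_0b_j$ correctly upgrades this to the dichotomy ``$g$ constant'' versus ``$g''<0$ at every critical point,'' which is exactly what the at-most-one-critical-point reduction needs. (One small remark: since $f(0)=\prod_j b_j/b_0>0$ and no factor can vanish on $[0,1]$, every $\theta a_j+b_j$ is in fact positive there, so the absolute values in your expression for $g$ are not needed.) Your argument is shorter, avoids the symmetric-function machinery entirely, handles $a_0=0$ and $a_j=0$ uniformly rather than by case split, and makes the structural reason for quasiconcavity (a mean-versus-quadratic-mean comparison at critical points) transparent; the paper's computation, by contrast, exposes the explicit polynomial structure of $g'$, which is of independent use if one wants quantitative information about where the single peak lies.
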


We will show that the set of allocations, considered for each period is convex. Since the minimum of a function on a convex set is at one of the extreme points, Lemma \ref{cost-prod1} tells us we can restrict to examining these, of which there are a finite number (holding the parameters for other periods fixed).
We know that LCP allocations are Pareto optimal from Theorem \ref{LCP-po}. Hence, we must look at the extreme points that lie on the Pareto frontier for every combination from the set of $\mathcal{N}$ agents. The Pareto frontier is determined by the capacity of a resource, $ \sum_{i \in \mathcal{N}} \lambda_{i} \cdot d_{ir} \; \leq \; 1, \quad \forall r \in \mathcal{R}$
where $\lambda_{i}$ denotes the instantaneous allocation of agent \textit{i}, and $d_{ir}$ corresponds to the demand of agent \textit{i} for resource \textit{r}. The Pareto frontier is the boundary defined by at most \textit{m} hyperplanes where each hyperplane corresponds to a resource $r \in \mathcal{R}$ getting saturated. The extreme points are defined by the relevant intersections of these hyperplanes, i.e. points where at least two resources are saturated as well as points where a single agent is allocated. This characterization is similar in spirit to the result obtained by \citeauthor{bogomolnaia2017competitive} \shortcite{bogomolnaia2017competitive} that an allocation is \textit{competitive} when dividing a set of bads if and only if the utility profile is negative and the allocation is a critical point of the product of the absolute values of the utilities in the negative efficiency frontier. 

However, our setting has one additional complication not present in their setting.  In addition to resource constraints, since our optimization is based on optimizing on one period while holding the others constant we also have constraints to ensure that all agents finish in the correct order. There may be extreme points of the feasible set where these constraints are tight, which represent ties. We give such an example in the Appendix.

\begin{theorem} \label{cost-prodLCP}
An LCP allocation $\mathbf{A}^{\textnormal{LCP}}(\cdot)$ consists of at most $n$ time interval allocations.  Unless two agents tie in completion time, the allocation of agents in each interval is an extreme point of the Pareto frontier.
\end{theorem}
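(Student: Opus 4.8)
The plan is to assemble three ingredients already in hand: the reduction to fixed allocations (Lemma~\ref{alloc-behavior}), the quasiconcavity of the per-interval cost product (Lemma~\ref{cost-prod1}), and the Pareto optimality of LCP (Theorem~\ref{LCP-po}). The first assertion, that an LCP allocation uses at most $n$ interval allocations, is essentially immediate. By Lemma~\ref{alloc-behavior} we may take the LCP allocation to be fixed, and by definition a fixed allocation is constant on each half-open interval $[t_a,t_b)$ cut out by the $n$ completion times $t_1\le\cdots\le t_n$ together with the origin. These points partition $[0,t_n]$ into at most $n$ cells, so there are at most $n$ interval allocations.

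For the second assertion I would argue one interval at a time. Fix an optimal LCP allocation and an index $i$, and freeze the parameters of all intervals other than $i$, so that the free variables are $\boldsymbol{\lambda}=(\lambda_{i,i},\dots,\lambda_{n,i})$ for the active agents $\mathcal{N}'=\{i,\dots,n\}$. First I would verify that the set of valid $\boldsymbol{\lambda}$ is a polytope, hence compact and convex: the resource constraints $\sum_{j\in\mathcal{N}'}\lambda_{j,i}d_{jr}\le 1$ and the nonnegativity constraints $\lambda_{j,i}\ge 0$ are linear (and since $d_{jr}=1$ for some $r$, each $\lambda_{j,i}\le 1$, so the set is bounded), while the constraints enforcing the prescribed completion order become affine once written through the variables $u_j=t_j\lambda_{i,i}$, which are affine in $\boldsymbol{\lambda}$ by the computation inside the proof of Lemma~\ref{cost-prod1} (e.g.\ $t_j\ge t_{j'}$ is equivalent to $u_j\ge u_{j'}$). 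Since the whole allocation is LCP-optimal, its restriction to interval $i$ minimizes $CP_i$ over this polytope; by Lemma~\ref{cost-prod1}, $CP_i$ is quasiconcave, and a quasiconcave function attains its minimum over a compact convex set at an extreme point. Indeed, writing the current minimizer as a convex combination $\sum_k\theta_k e_k$ of the finitely many vertices $e_k$, quasiconcavity gives $CP_i(\sum_k\theta_k e_k)\ge\min_k CP_i(e_k)$, so some vertex attains the same minimal value and may replace the interval-$i$ allocation without changing the cost product. Carrying out this replacement for $i=1,\dots,n$ leaves the parameters of already-processed intervals untouched and never increases the objective, producing an optimal LCP allocation that sits at a vertex of the feasible polytope in every interval.

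It remains to identify these vertices with extreme points of the Pareto frontier. By Theorem~\ref{LCP-po} the allocation is Pareto optimal, so each interval allocation must lie on the Pareto frontier. A vertex of the feasible polytope is determined by a maximal set of tight constraints, which are of two kinds: resource-saturation and nonnegativity constraints---whose tightness produces exactly the Pareto-frontier extreme points described in the text (at least two resources saturated, or a single agent allocated)---and the order-preserving constraints. If no order-preserving constraint is tight, the vertex is pinned down by resource and nonnegativity constraints alone and is therefore an extreme point of the Pareto frontier; a tight order-preserving constraint, on the other hand, is precisely the event that two agents finish at the same time, i.e.\ a tie. Hence, absent a tie, every interval allocation is an extreme point of the Pareto frontier. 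I expect the main obstacle to be this last step: pinning down that the \emph{only} extra vertices created by intersecting the Pareto frontier with the ordering constraints are those corresponding to ties, and checking that the sequential vertex-replacement does not quietly reintroduce a non-extremal or order-violating solution. The quasiconcave-minimization fact itself is standard once Lemma~\ref{cost-prod1} supplies quasiconcavity.
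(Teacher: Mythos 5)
Your proposal is correct and follows essentially the same route as the paper's proof: reduce to per-interval optimization with other intervals frozen, invoke the quasiconcavity of $CP_i$ from Lemma~\ref{cost-prod1} to place the minimizer at an extreme point of the closed convex feasible set, and then use Pareto optimality (Theorem~\ref{LCP-po}) together with the classification of tight constraints (resource/nonnegativity versus ordering) to identify non-tie vertices with extreme points of the Pareto frontier. Your additional details --- the explicit appeal to Lemma~\ref{alloc-behavior} for the ``at most $n$ intervals'' claim and the convex-combination argument showing some vertex attains the minimum --- only make explicit what the paper leaves implicit.
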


\begin{proof}
Let the LCP solution be given, with agents ordered by their completion time.  Since this minimizes the cost product, it must also do so when we only optimize over the allocation in one time period, holding all the others fixed.

We know from Lemma \ref{cost-prod1} that the cost product function at the $i$-th time interval expressed as a function of the allocation of $n - i + 1$ agents left to finish their work $CP_i(\lambda_{i,i}, \; \lambda_{i+1,i}, \; \cdots ,\; \lambda_{n,i})$ is Quasiconcave. The minimum of a Quasiconcave function over a closed convex set is attained at the extreme points of that closed convex set.

Our optimization problem over $(\lambda_{i,i}, \; \lambda_{i+1,i}, \; \cdots ,\; \lambda_{n,i})$ has three types of constraints.  The first two, $\lambda_{j,i} \geq 0$ and $\sum_{j \in \mathcal{N}} \lambda_{j} \cdot d_{jr} \leq 1$ are non-strict linear inequalities, so their intersection is a closed convex set.  The third ensures that agent $j$ is in fact the $j$-th agent to finish, i.e., $t_j \leq t_{j+1}$ for all $j$.  As shown in the proof of Lemma~\ref{cost-prod1}, the $u_j$ (which are a monotone transformation of the $t_j$) are affine functions, so each of these constraints defines a closed convex set.  Our feasible set represents the intersection of these constraints as well and hence, it is closed and convex.

We also know from Theorem \ref{LCP-po} that the LCP allocations are Pareto optimal. Thus, we only need to consider extreme points which lie on the Pareto frontier.  If these extreme points are defined only by the first two types of constraints, then they are extreme points of the Pareto frontier itself.  Otherwise, they include a constraint of the third type being tight, i.e. there are two agents with a tie in completion time.
\end{proof}

\begin{lemma} \label{lcp-sp}
The LCP mechanism violates SP.
\end{lemma}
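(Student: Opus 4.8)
The plan is to exhibit an explicit counterexample, since refuting strategy-proofness requires only a single instance in which one agent strictly reduces its true completion time by misreporting. I would reuse the two-agent, two-resource instance of Example~\ref{lcp-example}, where $\mathbf{d_1} = \langle 1, \tfrac12\rangle$, $\mathbf{d_2} = \langle \tfrac14, 1\rangle$, and $\mathbf{K}$ is the identity, so that under truthful reporting agent $1$ completes at $t_1 = \tfrac76$. The misreport I would use has agent $1$ \emph{inflate} the demand on its non-dominant resource, reporting $\mathbf{d}_1' = \langle 1, x\rangle$ for some $x \in (\tfrac12, 1)$ while keeping the dominant entry at $1$. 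Over-reporting (rather than under-reporting) is essential: agent $1$ then receives at least $x > \tfrac12$ units of resource $2$ by the time it has received its full unit of resource $1$, so it genuinely completes its true work $\mathbf{w_1} = \langle 1, \tfrac12\rangle$ by its reported completion time, whereas any under-report would leave it short on some resource so that it never finishes.

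First I would recompute the LCP allocation for the \emph{reported} instance $\{\langle 1, x\rangle, \langle \tfrac14, 1\rangle\}$. By Theorem~\ref{cost-prodLCP} the allocation on the first interval is an extreme point of the Pareto frontier, so I would take the interior extreme point where both resource constraints are tight, $\lambda_{1,1} + \tfrac14\lambda_{2,1} = 1$ and $x\,\lambda_{1,1} + \lambda_{2,1} = 1$, yielding $\lambda_{1,1} = \tfrac{3}{4-x}$ and hence reported completion times $t_1^{\mathrm{rep}} = \tfrac{4-x}{3}$ and $t_2^{\mathrm{rep}} = 1+x$ (the latter from the residual resource-$2$ work agent $2$ finishes alone on the second interval). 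I would then confirm this extreme point is actually the reported LCP solution by checking that its reported cost product $\tfrac{(4-x)(1+x)}{3}$ is smaller than that of the only competing extreme points, namely the two sequential allocations, each giving product $2$; since $\tfrac{(4-x)(1+x)}{3} < 2$ for $x \in (\tfrac12,1)$, the simultaneous allocation wins and agent $1$ finishes strictly first.

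Finally I would translate the reported allocation into agent $1$'s \emph{true} cost. On the first interval agent $1$ receives resource $1$ at rate $\lambda_{1,1}$ and resource $2$ at rate $x\,\lambda_{1,1}$, so it accumulates its required unit of resource $1$ exactly at $t_1^{\mathrm{rep}} = \tfrac{4-x}{3}$, at which moment it has already received $x > \tfrac12$ units of resource $2$; thus its true completion time equals $\tfrac{4-x}{3}$. Choosing $x = \tfrac34$ gives a true completion time of $\tfrac{13}{12} < \tfrac76$, so agent $1$ strictly benefits and SP fails. The main obstacle I anticipate is not any single calculation but getting the \emph{model of misreporting} right: I must argue carefully that (i) the over-report guarantees true completion coincides with reported completion, so the bottleneck analysis is valid, and (ii) the extreme point I selected is genuinely the reported-optimal LCP allocation rather than one of the tie or boundary configurations flagged after Theorem~\ref{cost-prodLCP}. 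Keeping $x = \tfrac34$ strictly interior ensures both agents remain active with both resources saturated, avoiding the degenerate $x = 1$ boundary where agent $2$'s share vanishes.
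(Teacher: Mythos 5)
Your proposal is correct and follows essentially the same strategy as the paper's proof: exhibit a two-agent, two-resource instance in which agent 1 inflates the demand on its non-dominant resource (keeping the dominant entry at 1), receives a larger share at the interior extreme point where both resources saturate, and completes its true work strictly earlier than under truthful reporting. The paper simply uses a different numerical instance ($\mathbf{d}_1 = \langle \tfrac12, 1\rangle$, $\mathbf{d}_2 = \langle 1, \tfrac16\rangle$, misreport $\langle \tfrac23, 1\rangle$, improving the cost from $\tfrac{11}{10}$ to $\tfrac{16}{15}$), and your verification that the shared extreme point beats the sequential and tied alternatives is, if anything, more careful than the paper's.
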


\section{Envy-freeness of LCP}

In all prior settings we are aware of with divisible goods, solutions based on maximizing Nash welfare have been envy-free \cite{VARIAN197463, moulin2004fair, bogomolnaia2017competitive,bogomolnaia2019dividing,goel2019markets}. Even in non-divisible settings, it has strong envy-freeness properties that have led it to be described as "unreasonably fair" \cite{Caragiannis2016MaxNashWelfare}.  Surprisingly, we show in our setting that the LCP solution is not envy-free, even in expectation.

\begin{lemma} \label{lcp-envy}
The LCP mechanism does not satisfy EF (even in expectation) when the resource allocation involves three or more agents and two or more resources.  
\end{lemma}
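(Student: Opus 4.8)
The plan is to refute the property by exhibiting an explicit instance with the minimal parameters $n=3$ and $m=2$ on which the LCP allocation has strict envy; if the optimum is unique, failure of envy-freeness in expectation is immediate, since there is nothing to randomize over. The design principle is dictated by how cross-costs behave under Leontief preferences. If agent $i$ is handed agent $j$'s time-varying profile $\mathbf{A}_j(t)=\lambda_j(t)\mathbf{d}_j$, then $i$ processes work at the bottleneck rate $\min_{r} A_{jr}(t)/d_{ir}$, so $c_i(\mathbf{A}_j)$ solves $\int_0^{c_i(\mathbf{A}_j)} \min_{r} A_{jr}(t)/d_{ir}\,dt = k_i$. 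Two observations drive the construction: (i) unless the support of $\mathbf{d}_i$ is contained in that of $\mathbf{d}_j$ the rate is identically $0$ and $c_i(\mathbf{A}_j)=\infty$, so envy is only possible toward an agent whose demand covers the envier's; and (ii) since demands are normalized to have a unit coordinate, $\min_r d_{jr}/d_{ir}\le 1$, so $i$ can extract at most the amount of each needed resource that $j$'s own schedule delivers. Hence the target configuration is a ``specialist'' whose demand support is strictly contained in that of a ``generalist'' and whose work is no larger; LCP's tendency to front-load resources (to drive one completion time down and shrink the product) then produces a resource-rich early envelope for the generalist that the specialist could exploit to finish sooner than its own LCP completion time.

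Concretely, I would take a generalist with $\mathbf{d}=(1,1)$ and a specialist with $\mathbf{d}=(1,0)$ (or $(1,\epsilon)$ to keep demands generic), and choose the third agent and the works $k_i$ so that the specialist completes \emph{after} the generalist while having $k_{\text{spec}}\le k_{\text{gen}}$. The generalist's own progress rate is $\lambda_{\text{gen}}(t)$ and it delivers a cumulative amount of the shared resource equal to $k_{\text{gen}}$, while the specialist extracts that resource at the same rate; hence the specialist completes within the generalist's envelope, giving $c_{\text{spec}}(\mathbf{A}_{\text{gen}})\le t_{\text{gen}}<t_{\text{spec}}$ and thus strict envy, whereas the generalist cannot envy the specialist because the required support containment fails.

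To make this rigorous I would compute the LCP allocation via Theorem~\ref{cost-prodLCP}: enumerate the at most $n!$ completion orderings and, for each, the finitely many extreme points of the Pareto frontier on every interval (points where two resource hyperplanes meet or a single agent is served), substitute these $\lambda_{i,j}$ into the recursive completion-time formula~(1), and minimize the resulting cost product over this finite candidate set. Choosing the parameters generically (distinct works, no coincidental ties) forces a unique minimizer, so the ``in expectation'' quantifier is degenerate and ordinary envy suffices; should a symmetry be convenient instead, I would average $c_i(\cdot)$ over the tied optima and note that the envy gap I exploit is not between the interchanged agents and therefore does not cancel.

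The main obstacle is the global-optimality certification rather than any single calculation: I must verify that the allocation I analyze actually minimizes the cost product among all orderings and all extreme-point sequences, not merely that it beats the DRF-W or shortest-job-first baselines. This is exactly where Theorem~\ref{cost-prodLCP} is essential, since it reduces an infinite-dimensional optimization to a finite comparison; the delicate part is tuning $k_i$ and the demands so that simultaneously (a) the specialist is driven to finish last, (b) its envy is \emph{strict} with a margin that survives the bottleneck integral, and (c) no alternative ordering yields a smaller product. Because envy-freeness has already been established for $n=2$ and for $m=1$, the counterexample is necessarily forced to use the genuine three-way interaction over two resources, so I do not expect a symmetry- or dimension-reduction shortcut around the explicit computation.
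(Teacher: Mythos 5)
Your plan matches the paper's proof essentially exactly: the paper's counterexample is precisely the generalist/specialist construction you describe, with a generalist $\mathbf{d}_1=(1,1)$, a specialist $\mathbf{d}_2=(1,\epsilon)$ with equal work $k_1=k_2=1$, and a third agent $\mathbf{d}_3=(\epsilon,1)$ with $k_3>2$ chosen so that LCP runs the generalist alone first and then lets agents 2 and 3 share, after which agent 2 strictly envies agent 1; global optimality is certified, as you propose, by the finite enumeration licensed by Theorem~\ref{cost-prodLCP}. The only work remaining in your version is carrying out that finite cost-product comparison explicitly, which the paper does for the four candidate schedules.
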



Consider an example of three agents requiring two resources that have the demand matrix $\mathbf{D}$, and $\mathbf{K}$ as follows, \\[2 pt]
\noindent $\mathbf{D} = \begin{bmatrix} 1 \quad & 1 \\
1 \quad & \epsilon \\
\epsilon \quad & 1\end{bmatrix} ,\;$ 
\noindent $\mathbf{K} = \begin{bmatrix} 1 \quad & 0 \quad & 0 \\
0 \quad & 1 \quad & 0 \\
0 \quad & 0 \quad & k_{3} \\\end{bmatrix}, \\$ 

\noindent where $\epsilon << 1$ and $k_{3} > 2$. 

Figure \ref{lcp-envy-eg}, shows the form of the LCP allocation for the above demands $\mathbf{D}$, and $\mathbf{K}$. In this example, the first and second agents have the same dominant resource and are required to complete the same amount of work. Also, the first agent demands more of the second resource compared to the second agent while, the third agent has a different dominant resource when compared to the second agent. As illustrated by Figure \ref{lcp-envy-eg}, the LCP mechanism would be lowering the cost product by first completing the allocation of the first agent and then allowing the second and third agents to share their allocations. This allocation is efficient but the second agent would envy the first agent. A more detailed description of the example is given in the Appendix.  

\begin{figure}[!ht] 
 \scalebox{0.68}{\begin{tikzpicture}
\draw  node[anchor = west] at (-2, 1) {$\text{R1}$} ;
\draw[very thick, ->] (-0.2, 0) -- (9.6, 0) node[right] {$\text{time} \quad  t$} ;
\draw[very thick, ->] (0, -0.2) -- (0, 3.2) node[above] {$\text{Allocation} \quad \mathbf{A}_{i1}(t)$} ;
\draw node[anchor = north west] at (0, 0) {$\text{0}$}; 
\draw node[anchor =  east] at (-0.1, 0) {$\text{0}$};

\pgfplothandlermark{\pgfuseplotmark{|}}
\pgfplotstreamstart
\pgfplotstreampoint{\pgfpoint{0cm}{0cm}}
\pgfplotstreampoint{\pgfpoint{2cm}{0cm}}
\pgfplotstreampoint{\pgfpoint{4.2cm}{0cm}}
\pgfplotstreampoint{\pgfpoint{8.2cm}{0cm}}
\pgfplotstreamend

\pgfplothandlermark{\pgfuseplotmark{-}}
\pgfplotstreamstart
\pgfplotstreampoint{\pgfpoint{0cm}{0cm}}
\pgfplotstreampoint{\pgfpoint{0cm}{0.2cm}}
\pgfplotstreampoint{\pgfpoint{0cm}{1.82cm}}
\pgfplotstreampoint{\pgfpoint{0cm}{2cm}}
\pgfplotstreamend

\fill[fill= blue!20] (0, 0) rectangle (2, 2) ;
\draw[blue, thick] (0, 2) -- (2,2);
\draw[blue, thick] (2,2) -- (2,0);
\draw node[anchor = north] at (2, 0) {$\text{1}$}; 
\draw node[anchor = east] at (0, 2) {$\text{1}$} ;

\fill[fill = green!20] (2, 0) rectangle (4.2, 1.82) ;
\draw[green, thick] (2, 1.82) -- (4.2, 1.82) ;
\draw[green, thick] (4.2, 1.82) -- (4.2, 0); 
\draw node[anchor = north] at (4.2, 0) {$2+ \epsilon$};
\draw node[anchor = north east] at (0, 1.82) {$\displaystyle \frac{1}{1+\epsilon}$} ;

\fill[fill = red!20] (2, 1.82) rectangle (4.2, 2) ;
\draw[red, thick] (2, 2) -- (4.2, 2) ;
\draw[red, thick] (4.2, 2) -- (4.2, 1.82) ;
\fill[fill = red!20] (4.2, 0) rectangle (8.2, 0.2);
\draw[red, thick] (4.2, 0.2) -- (8.2, 0.2);
\draw[red, thick] (8.2, 0.2) -- (8.2, 0);
\draw node[anchor = north] at (8.2, 0) {$1+ \epsilon + k_{3} $};
\draw node[anchor = east] at (0, 0.2) {$\epsilon$} ;

\draw  node[anchor = west] at (-2, 6) {$\text{R2}$} ;
\draw[very thick, ->] (-0.2, 5) -- (9.6, 5) node[right] {$\text{time} \quad  t$} ;
\draw[very thick, ->] (0, 4.8) -- (0, 8.2) node[above] {$\text{Allocation} \quad \mathbf{A}_{i2}(t)$} ;
\draw node[anchor = north west] at (0,5) {$0$};
\draw node[anchor = east] at (-0.1,5) {$0$} ;

\pgfplothandlermark{\pgfuseplotmark{|}}
\pgfplotstreamstart
\pgfplotstreampoint{\pgfpoint{0cm}{5cm}}
\pgfplotstreampoint{\pgfpoint{2cm}{5cm}}
\pgfplotstreampoint{\pgfpoint{4.2cm}{5cm}}
\pgfplotstreampoint{\pgfpoint{8.2cm}{5cm}}
\pgfplotstreamend

\pgfplothandlermark{\pgfuseplotmark{-}}
\pgfplotstreamstart
\pgfplotstreampoint{\pgfpoint{0cm}{5cm}}
\pgfplotstreampoint{\pgfpoint{0cm}{5.18cm}}
\pgfplotstreampoint{\pgfpoint{0cm}{7cm}}
\pgfplotstreamend

\fill[fill= blue!20] (0, 5) rectangle (2, 7) ;
\draw[blue, thick] (0, 7) -- (2,7);
\draw[blue, thick] (2,7) -- (2,5);
\draw node[anchor = north] at (2,5) {$1$};
\draw node[anchor = east] at (0,7) {$1$};

\fill[fill = green!20] (2, 5) rectangle (4.2, 5.18) ;
\draw[green, thick] (2, 5.18) -- (4.2, 5.18) ;
\draw[green, thick] (4.2, 5.18) -- (4.2, 5); 
\draw node[anchor = north] at (4.2, 5) {$2+ \epsilon$} ;
\draw node[anchor = south east] at (0, 5.18) {$\displaystyle \frac{\epsilon}{1 + \epsilon}$};

\fill[fill = red!20] (2, 5.18) rectangle (4.2, 7) ;
\draw[red, thick] (2, 7) -- (4.2, 7) ;
\draw[red, thick] (4.2, 2) -- (4.2, 1.82) ;
\fill[fill = red!20] (4.2, 5) rectangle (8.2, 7);
\draw[red, thick] (4.2, 7) -- (8.2, 7);
\draw[red, thick] (8.2, 7) -- (8.2, 5);
\draw node[anchor = north] at (8.2, 5) {$1+ \epsilon + k_{3}$};

\matrix[draw, below left] at (current bounding box.north east)
{\node at ++(-2,0.5) [rectangle, draw = blue, fill= blue!20] {};   \node at ++(-0.5,0.65) [] {Agent 1};\\
\node at ++(-2, 0.5) [rectangle, draw = green, fill = green!20]{};
\node at ++(-0.5, 0.65) [] {Agent 2};\\
\node at ++(-2, 0.5) [rectangle, draw = red, fill = red!20]{};
\node at ++(-0.5, 0.65) [] {Agent 3};\\
} ;

\end{tikzpicture}}
\caption{Example where LCP has envy} \label{lcp-envy-eg}
\end{figure}
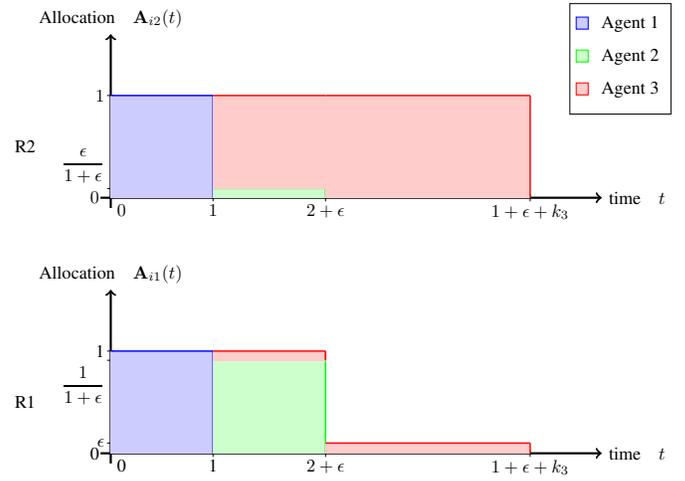

While these examples rule out envy-freeness in general, they seem to be relatively rare in randomly generated examples as our simulations show.

The requirement of three agents and two goods is tight.  We show in the Appendix that with only two agents or a single resource the LCP allocation is EF in expectation.

\begin{lemma} \label{lcp-1res_and_2agents}
For single resource allocations, the LCP mechanism allocates the resource in the form of the Shortest Job First (SJF) and the allocations are envy-free in expectation. The LCP allocations are envy-free in expectation when allocating multiple resources for two agents.
\end{lemma}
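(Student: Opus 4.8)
The plan is to handle the two claims separately, deriving the shortest-job-first (SJF) characterization for the single-resource case and then analyzing the two-agent case through the explicit completion-time formulas of Equation~(1).

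For the single-resource claim I would first invoke Theorem~\ref{cost-prodLCP}. With one resource the feasible set is the simplex $\sum_i \lambda_i \le 1$, whose only extreme points assign the entire resource to a single agent, and a tie with strictly positive shared allocations is never optimal: sharing leaves the total completion time $\sum_i k_i$ unchanged while any sequential schedule strictly lowers every prefix-sum factor of the product. Hence each interval serves exactly one agent, the LCP allocation is purely sequential, and the problem reduces to choosing the order minimizing $\prod_i t_i = \prod_i \sum_{\ell \le i} k_{(\ell)}$. A standard adjacent-transposition exchange argument shows this product of prefix sums is minimized exactly by the ascending order, with freedom only to permute agents of equal work, establishing LCP $=$ SJF. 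For envy-freeness in expectation I would then note that following another agent's profile only supplies resource during that agent's service window: an agent with strictly larger work never accumulates enough to finish (infinite cost), while one with strictly smaller work finishes strictly later than under its own schedule, so no envy arises across distinct work values. Within a class of equal-work agents I would randomize uniformly over the symmetric optimal orders, and by symmetry the expected cost of following any class-mate's profile equals one's own expected cost.

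For two agents I would again apply Theorem~\ref{cost-prodLCP} to reduce to at most two intervals: in the first both agents are active at an extreme point of the two-agent Pareto frontier with shares $\lambda_{1,1},\lambda_{2,1}$, and in the second the later-finishing agent runs alone at full dominant share ($\lambda_{2,2}=1$), with agents labelled so $t_1 \le t_2$. The key computation is the counterfactual cost $c_i(A_j)$: since agent $i$ consumes at the Leontief rate $\min_r A_{jr}(t)/d_{ir}$, following agent $j$'s entire profile lets agent $i$ accumulate exactly $\mu_{ji}\,k_j$ units of its own work, where $\mu_{ji} = \min_r d_{jr}/d_{ir}$ and $\mu_{12}\mu_{21}\le 1$. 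From this I would extract clean envy criteria: the slower agent $2$ strictly envies agent $1$ only if $\mu_{12}\,k_1 > k_2$ (equivalently $w_1 > w_2$ componentwise), and the faster agent $1$ strictly envies agent $2$ only if $\mu_{21}\lambda_{2,1} > \lambda_{1,1}$.

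The remaining and hardest step is ruling these out via LCP-optimality. For the slower agent, $\mu_{12}k_1>k_2$ forces $w_1$ to strictly dominate $w_2$ coordinatewise, and a domination/exchange argument then shows the cost product can be strictly decreased by finishing the dominated agent $2$ first, contradicting $t_1 \le t_2$; at equality the agents tie and are interchangeable, so randomizing over the symmetric optima restores envy-freeness in expectation. For the faster agent, I would argue via a marginal analysis at the Pareto-frontier extreme point, using that the LCP solution minimizes $t_1 t_2$ over the first interval while holding the rest fixed (Theorems~\ref{cost-prodLCP} and~\ref{LCP-po}), that $\mu_{21}\lambda_{2,1} \le \lambda_{1,1}$ must hold, since otherwise shifting share from agent $2$ toward agent $1$ strictly lowers $t_1 t_2$. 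The \textbf{main obstacle} is precisely this translation of each strict-envy inequality into a strict improvement of the cost product: it requires carefully combining the extreme-point structure, Pareto optimality, and the explicit dependence of $t_1 t_2$ on the first-interval shares, and then verifying that every boundary case is a genuine tie handled by randomization.
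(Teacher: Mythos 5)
Your single-resource argument is essentially the paper's: the paper also first proves the allocation must be serial (via the prefix-sum lower bound $t_i \ge \sum_{\ell\le i} k_\ell$, attained only by full sequential allocation) and then uses the same adjacent-swap argument to force ascending order, and its envy-freeness argument is the same observation that following a shorter job's profile supplies only $k_j < k_i$ units while following a longer job's profile delays completion, with randomization handling ties. For two agents, however, you take a genuinely different route. The paper's proof is a single global comparison: it first notes that, because demands are normalized, the envying agent's total progress from the envied agent's bundle is at most $k_{\mathrm{envied}}$, so $k_{\mathrm{envier}} \le k_{\mathrm{envied}}$; it then converts the envy condition into the rate inequality $\lambda_{1,1} < \sigma\lambda_{2,1}$ with $\sigma = \min_r d_{2r}/d_{1r}$, combines it with the dominant-resource capacity constraint to get $\lambda_{1,1} < \tfrac12$, and concludes that the SJF schedule has a strictly smaller cost product ($t_1't_2' < (0.5\,t_1)(k_1+k_2) \le t_1 k_2 \le t_1 t_2$), contradicting optimality. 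Your plan instead splits envy into two directions via the counterfactual rates $\mu_{ji} = \min_r d_{jr}/d_{ir}$ and kills each separately: a coordinatewise-dominance exchange for the later finisher and a local perturbation at the Pareto-frontier extreme point for the earlier finisher. Your marginal argument does check out (shifting dominant share from agent 2 to agent 1 at rate $\mu_{21}$ per unit is resource-feasible, and under $\mu_{21}\lambda_{2,1} > \lambda_{1,1}$ together with $\lambda_{2,1}t_1 \le k_2$ the derivative of $t_1 t_2$ is strictly negative), and your explicit two-direction case split is in fact more careful than the paper's write-up, which phrases everything in terms of the first finisher envying the second.

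The one place where your plan is thinner than it should be is the ``slower agent envies faster agent'' branch. The naive trajectory swap suggested by $w_1 \ge w_2$ componentwise is not automatically feasible (the usage difference $(\dot f_2 - \dot f_1)(w_1 - w_2)$ need not be nonpositive pointwise), so the ``domination/exchange'' step you flag as the main obstacle really does require constructing an explicit alternative schedule (e.g., giving agent 2 the scaled-down copy $\mu_{12}\lambda_{1,1}\mathbf{d}_2$ of agent 1's bundle so it finishes by $k_2/(\mu_{12}\lambda_{1,1}) \le t_1$) and then bounding the new product. The paper's trick --- deriving $k_{\mathrm{envier}} \le k_{\mathrm{envied}}$ and $\lambda_{1,1} < \tfrac12$ and comparing once against SJF --- avoids this case analysis entirely and is the cleaner path if you want to close that step.
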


\section{Simulation Results} \label{simulation}

\begin{table*}[t]
\centering
\begin{tabular}{|p{32mm}| p{32mm}| p{15mm}| p{15mm}| p{15mm}|p{15mm}|}
\hline
Evaluation metrics & Mechanism & \multicolumn{4}{c|}{Number of agents $n$}\\
\cline{3-6}
&& $n = 2$ & $n = 3$ & $n = 4$ & $n = 5$\\
\hline
Envy-freeness & LCP-X &  100 \% &  99.95 \% &  99.95 \% &  99.90 \%\\
\cline{2-6}
& DRF-W  &  100 \% &  100\% &  100 \% &  100\%\\
\hline
Sharing-Incentives & LCP-X &  100 \% &  100\% &  100 \% &  100 \%\\
\cline{2-6}
& DRF-W  &  100 \% &  100 \% &  100 \% &  100 \%\\
\hline
Lower & LCP-X &  2.1 \% &  3.6 \% &  4.7 \% &  5.1 \%\\
\cline{2-6}
Makespan & DRF-W  &  58.3 \% &  73.6 \% &  76 \% &  78.6 \%\\
\cline{2-6}
& LCP-X equals DRF-W  &  39.6 \% &  23.4 \% &  19.3 \% &  16.3 \% \\
\hline
Lower mean & LCP-X &  95.65 \% &  99.3 \% &  100 \% &  100 \%\\
\cline{2-6}
completion time & DRF-W &  4.35 \% &  0.7 \% &  0 \% &  0 \%\\
\hline
\multicolumn{2}{|c|}{LCP-X Pareto dominates DRF-W} &  39.6 \% &  24 \% &  20.3 \% &   18.15 \% \\
\hline
\end{tabular}
\caption{\label{tab:1}Comparison of LCP-X and DRF-W on 2000 instances for each $n$}
\end{table*}

We simulate the LCP and DRF-W mechanisms on randomly generated problem instances in order to better understand the trade-off between fairness and efficiency. The simulation varies the number of agents from $2$ to $5$. For each number of agents, 2000 examples were generated. The number of resources for each example was chosen uniformly at random between 1 to 10. The demand vector of an agent was generated using a uniform distribution on $(0.0, \; 1.0]$. The demand vector was then normalized for each agent. The amount of work $k_{i}$ required for agent \textit{i} to complete was chosen uniformly at random from $(0.0, \; 100.0]$. Using uniform randomness tends to make it harder for the LCP mechanism when compared to more realistic distributions, since if some resources are globally more desired than others or if there are large disparities in $k_{i}$ then there are effectively fewer resources or agents to consider.

The DRF-W allocation is straightforward to compute as a DRF allocation among agents over each interval. Theorem~ \ref{cost-prodLCP}, characterizes the structure of LCP allocations, but the possibility of ties makes an exact computation challenging. Instead, we consider only the main branch of the characterization, where in each period the allocation is an extreme point of the Pareto frontier. We refer to the algorithm that considers only such candidates as LCP-X. With this restriction, we can compute the LCP-X allocation by enumeration, which is polynomial in the number of resources but exponential in the number of agents. While LCP-X does not immediately inherit the properties of LCP, the LCP-X allocation is based on the same principle as the LCP mechanism and is substantially easier to compute because the LCP-X minimizes over a restricted set that excludes certain types of ties. Our simulations show that despite this restriction,  the LCP-X generally satisfies the same desiderata as the LCP.


Table \ref{tab:1} summarizes the results of the comparisons between the DRF-W allocation and the LCP-X allocation. Consistent with our theoretical results for LCP, the LCP-X allocations always satisfied SI and were EF with two agents. Rare instances showed envy for LCP-X with more than two agents, but upon inspection these were instances where LCP would have envy as well.  This suggests that, at least for randomly generated instances, we should expect the LCP mechanism to almost always satisfy EF.  Our use of a continuous distribution for the amount of work each agent has ensured that ties in $k_i$ did not occur. Thus, there were no instances that would have satisfied EF in expectation but not EF.

We also compare DRF-W and LCP-X in terms of their makespan and average completion time, metrics commonly used in the systems settings which inspired our work \cite{Grandl2016Carbyne}. The makespan is the completion time of the last agent to complete its work and can be interpreted as a fairness property. We observe that the DRF-W allocation typically has a lower makespan than the LCP-X allocation. Since each agent makes equal progress in the DRF-W allocation, with more agents the makespan of the DRF-W mechanism decreases at a faster rate compared to the LCP mechanism. In the instances where the LCP-X allocation had a lower makespan, the LCP-X shares more aggressively, slightly slowing down the initial agents relative to the DRF-W allocation while speeding up the later agents.

The mean completion time of an agent represents an estimate of how fast an agent is expected to complete its work under the allocation and is an efficiency property. For most cases, the LCP-X allocation has a lower mean completion time because of its higher efficiency. In the examples where the DRF-W allocation had a lower mean completion time, we found that the LCP-X allocation takes the form of the shortest job first allocation whereas the sharing of the DRF-W allocation does a better job at minimizing the makespan thereby lowering the mean completion time.

Finally, we examined if the LCP-X allocation Pareto dominated the DRF-W allocation. From Table \ref{tab:1}, we see that the percentage of instances where the LCP-X allocation Pareto dominates the DRF-W allocation  decreases as the number of agents increase from $n = 2$ to $n = 5$. This decrease is largely accounted for the increase in the percentage of instances the DRF-W allocation has a lower makespan than the LCP-X allocation.

\section{Conclusion}

We analyzed the fair division problem for agents having Leontief preferences with limited demands. We showed that DRF-W allocations exhibit several fairness properties but are in general inefficient. The LCP mechanism on the other hand, outputs allocations that are highly efficient but have relatively weak fairness properties. For two agents, we see that the LCP allocations are better as they are envy-free in expectation and Pareto efficient. For more than two agents, the LCP mechanism does not satisfy envy-freeness (even in expectation). However, the LCP allocations satisfy the weaker fairness property of sharing incentives and the violations of envy-freeness rarely occurred in simulations.
 
We believe this is a reasonable trade-off because when applying our model in system settings (for example as in \citet{Grandl2016Carbyne}, \citet{Hindman2011Mesos}, \citet{vavilapalli2013apacheHadoopYARN}), the primary objection to DRF is its lack of efficiency. In contrast, LCP yields substantial efficiency gains and in simulations still appears to generally be fair despite the existence of non-EF examples. Also, we do not address the computational complexity of the LCP mechanism. However, we present LCP-X in the previous section \ref{simulation} which is a simplified version of the LCP mechanism. Our naive implementation of the LCP-X is polynomial in terms of the number of resources but exponential in the number of agents.

For future work, our observation that there is room to improve the efficiency of DRF at essentially no cost to fairness was motivated by \citet{Grandl2016Carbyne}, who heuristically apply this to jobs with online arrivals and jobs composed of subtasks with a DAG structure.  We plan to explore the application of the LCP solution in this richer setting.

Another direction would be to investigate a market interpretation of the LCP mechanism similar to the works of \citet{goel2019markets}, and the Eisenberg-Gale convex program. One issue here is that usually EF can be shown from the convex program but we know that the LCP mechanism does not satisfy EF.

\bibliography{bibfile}

\clearpage
\newpage

\section{Appendix}

This portion of the Appendix contains the proofs and other material omitted from the main manuscript.

\subsection{Preliminaries}

\begin{customlemma}{1}
Given any resource allocation $\mathbf{A}(\cdot)$, there exists an equivalent fixed allocation $\mathbf{A}^{\textnormal{fixed}}(\cdot)$ where all the agents have the same costs / utilities.
\end{customlemma}

\begin{proof}
Let us assume $\mathbf{A}(\cdot)$ to be the given resource allocation with completion times as $\{t_{i}, \quad \forall i \in \mathcal{N}\}$.
We show how to construct a $\mathbf{A}^{\textnormal{fixed}}(\cdot)$ to be the equivalent fixed allocation defined over time intervals determined by the completion times of $\mathbf{A}(\cdot)$.

The fixed allocation of an agent over an interval is determined by the average amount of work done by the agent over that interval. Let $\Delta w_{i}$ represent the amount of work completed by an agent \textit{i} over the interval $[t_{a}, \; t_{b})$ for $\Delta t_{b} = t_{b} - t_{a}$ units of time. Then, $\Delta w_{i}$ can expressed as $\Delta w_{i}  = \int_{t_{a}}^{t_{b}} \mathbf{A}_{i}(t) dt$. The fixed allocation of agent \textit{i} over the interval is expressed as follows \\$\mathbf{A}^{\textnormal{fixed}}_{i}(t) = \dfrac{\Delta w_{i}}{\Delta t_{b}}, \quad \forall t \in [t_{a}, \; t_{b})$.

We first show that $\mathbf{A}^{\textnormal{fixed}}(\cdot)$ generates an instantaneous allocation at all times. That is, 
\begin{enumerate}
    \item $A^{\textnormal{fixed}}_{ir}(t) = \lambda_{i} \cdot d_{ir}, \quad \forall i \in \mathcal{N}, \; \forall r \in \mathcal{R}, \; \forall t$ for some $\lambda_{i} \in \mathbb{R}_{\geq 0}$.
    
    \item $\displaystyle \sum_{i \in \mathcal{N}} A^{\textnormal{fixed}}_{ir}(t) \leq  1, \quad \forall r \in \mathcal{R}, \; \forall t$.
    
\end{enumerate}

From the definition of a fixed allocation, we know that during any time interval $[t_{a}, \; t_{b})$ an agent \textit{i} consumes a total of $\Delta w_{ir}$ of resource \textit{r}. The fixed allocation over this interval is given by,
\begin{linenomath}
\begin{align*}
\begin{split}
A^{\textnormal{fixed}}_{ir}(t) & = \dfrac{\Delta w_{ir}}{\Delta t_{b}} 
\; = \; \dfrac{\int_{t_{a}}^{t_{b}} A_{ir}(t) dt}{\Delta t_{b}} \; \\ & = \dfrac{\int_{t_{a}}^{t_{b}} \lambda_{i}(t) \cdot d_{ir} \; dt}{\Delta t_{b}} \; = \;
\dfrac{\int_{t_{a}}^{t_{b}} \lambda_{i}(t) \; dt}{\Delta t_{b}} d_{ir} \\ & = \; \hat{\lambda}_{i} \cdot d_{ir}
\end{split}
\end{align*}
\end{linenomath}

\noindent where, $\hat{\lambda}_{i} = \int_{t_{a}}^{t_{b}} \lambda_{i}(t)dt / \Delta t_{b}$. Since $\int_{t_{a}}^{t_{b}} \lambda_{i}(t)dt \geq 0$, and $\Delta t_{b} > 0$, we have $\hat{\lambda}_{i} \geq 0,\; \forall t \in [t_{a}, \; t_{b}]$.

\noindent From the given allocation, we have $\displaystyle \sum_{i \in \mathcal{N}} A_{ir}(t) \leq 1, \quad \forall r \in \mathcal{R}, \; \forall t$. Consider the amount of resource \textit{r} consumed by all the agents in the time interval $[t_{a}, \; t_{b})$, that is, \\
$\displaystyle \int_{t_{a}}^{t_{b}} \sum_{i \in \mathcal{N}} A_{ir}(t) \; dt \; \leq \; \displaystyle \int_{t_{a}}^{t_{b}} 1 dt \; = \; t_{b} - t_{a}\quad$; or,
\\$\displaystyle \sum_{i \in \mathcal{N}} \int_{t_{a}}^{t_{b}} A_{ir}(t) \; dt \; \leq \; t_{b} - t_{a}$.
\\[6 pt]Substituting from the definition of $\Delta w_{i}$ yields,  \\
$\displaystyle \sum_{i \in \mathcal{N}} \Delta w_{ir} \; \leq \;  \Delta t_{b} \quad$; or,
\\$\displaystyle \sum_{i \in \mathcal{N}} \dfrac{\Delta w_{ir}}{\Delta t_{b}} \; \leq \; 1 \quad$; or,
\\$\displaystyle \sum_{i \in \mathcal{N}} A^{\textnormal{fixed}}_{ir}(t) \; \leq \; 1, \quad \forall t \in [t_{a}, \; t_{b})$.

\noindent Hence, the second condition for $\mathbf{A}^{\textnormal{fixed}}(\cdot)$ to be an instantaneous allocation is satisfied.

Given an allocation $\mathbf{A}(\cdot)$, let us assume an agent \textit{i} completes $\Delta w^{\Delta t_{b}}_{i}$ amount of work over the interval $[t_{a}, \; t_{b})$. The same amount of work gets completed by the fixed allocation $\mathbf{A}^{\textnormal{fixed}}(\cdot)$ as suggested by the following expression, 
$\Delta w^{\Delta t_{b}}_{i} \; = \; \int_{t_{a}}^{t_{b}} \mathbf{A}_{i}(t)\;dt \; =\;  A^{\Delta t_{b}}_{i} \cdot \Delta t_{b}$, where $\mathbf{A}^{\textnormal{fixed}}_{i}(t) \; = \; A^{\Delta t_{b}}_{i}$  is the fixed allocation of agent \textit{i} for $t \in [t_{a}, \; t_{b})$. Without loss of generality, let us assume the agents are numbered as per their completion times, that is, $t_{i}$ denotes the \textit{i}-th agent to complete its work. The first agent completes its work at $t_{1}$ for which,\\[2 pt] 
$\mathbf{w}_{1} \; = \; \displaystyle \int_{0}^{t_{1}} \mathbf{A}_{1}(t) dt\; = \; A^{\Delta t_{1}}_{1} \cdot t_{1} \quad \forall t \in [0, \;\; t_{1})$.\\[2 pt]

\noindent In general, the i-th agent completes it's work as per,\\[4 pt]
$\mathbf{w}_{i} \; = \; \displaystyle \int_{0}^{t_{i}} \mathbf{A}_{i}(t) dt \quad$; or, \\[4 pt]
$\mathbf{w}_{i} = \displaystyle \int_{0}^{t_{1}} \mathbf{A}_{i}(t) dt + \int_{t_{1}}^{t_{2}} \mathbf{A}_{i}(t) dt + ... +\int_{t_{i-1}}^{t_{i}} \mathbf{A}_{i}(t) dt$; or,\\[4 pt]     
$\mathbf{w}_{i} \; = \; A^{\Delta t_{1}}_{i} \cdot \Delta t_{1} + A^{\Delta t_{2}}_{i} \cdot \Delta t_{2} + ... + A^{\Delta t_{i}}_{i} \cdot \Delta t_{i}$.\\[4 pt]

\noindent From the above equations, we can infer that any agent \textit{i} in the fixed allocation $\mathbf{A}^{\textnormal{fixed}}(\cdot)$ is completing the same amount of work as the given allocation $\mathbf{A}(\cdot)$ by its completion time $t_{i}$, with the only difference being that the agents in the fixed allocation consume resources at constant rates during each interval. Hence, the agents in a given allocation $\mathbf{A}(\cdot)$ have the same costs/utilities as that in an equivalent fixed allocation $\mathbf{A}^{\textnormal{fixed}}(\cdot)$.
\end{proof}

\subsection{Dominant Resource Fairness with Work}

\begin{customthm}{1}
DRF-W mechanism satisfies weak-PO, SI, EF, and SP.
\end{customthm}

\begin{proof}
First, we show that the DRF-W mechanism satisfies SI. The linear program used to compute the DRF-W allocation at time \textit{t} can be expressed as follows,
\\$\lambda(t)  = \dfrac{1}{\underset{r \in \mathcal{R}}{\max} \displaystyle \sum_{i \in \mathcal{N}} d_{ir}}$,

\noindent where $\lambda(t)$ is the dominant share of the agents at time \textit{t}.
Since DRF satisfies SI, $\lambda_i \geq \dfrac{1}{n}$ at all $t$ for all agents $i$. The cost of an agent satisfies $c_{i}(\mathbf{A}_{i}(\cdot)) \leq n \cdot k_i$. 

We now show how the DRF-W mechanism satisfies weak Pareto-optimality. 
Call an allocation $\mathbf{A}(\cdot)$ a \textit{non-wasteful} allocation if the feasibility condition of the instantaneous allocation holds with an equality for at least one resource at all times. This implies that in a non-wasteful allocation there is at least one saturated resource at all times. The DRF-W mechanism is a non-wasteful allocation.
Since at least one resource is fully allocated during the first interval, making any agent achieve strictly more progress during this interval requires making some other agent achieve strictly less.  The result follows by induction.

Since DRF-satisfies SP, any alternate report made by agent $i$ will weakly decrease $\mathbf{A}_{i}(t)$ at all times $t$, thus ensuring that DRF-W satisfies SP as well.

Finally, we show that the DRF-W mechanism satisfies envy-freeness. Suppose there is an agent \textit{i} that envies the allocation of agent \textit{j}, that is, $c_{i}(\mathbf{A}_{j}(\cdot)) \leq c_{i}(\mathbf{A}_{i}(\cdot))$. This means agent \textit{i} would be able to make a better utilization of the allocation given to agent \textit{j}. This is only possible when agent \textit{j} has a strictly greater share of every resource than agent \textit{i}, that is, $\lambda_{j}(t) > \lambda_{i}(t)$. However, this would contradict the progressive filling allocation policy that the dominant resource of agents \textit{i} and \textit{j} is allocated at an equal rate, that is $\lambda_{i}(t) = \lambda_{j}(t) = \dfrac{1}{\underset{r \in \mathcal{R}}{\max} \displaystyle \sum_{i \in \mathcal{N}} d_{ir}}$. 

\end{proof}

\subsection{Least Cost Product Mechanism}

\begin{customlemma}{3}
Let $f(\theta) = \displaystyle \prod_{j \in \mathcal{N}^{'}} \displaystyle \frac{\theta a_{j} + b_{j}}{\theta a_{0} + b_{0}}$.  Suppose $f(\theta) > 0$ for $\theta \in [0,1]$; $a_{0}, b_{0} \geq 0$; and $b_{j} \geq 0$ for $j \in \mathcal{N}^{\;'} \subseteq \mathcal{N}$. Then $\log(f(\theta))$ is Quasiconcave on $[0,1]$.
\end{customlemma}

\begin{proof}
Let $g(\theta) = \log(f(\theta))$. We show that the function $g(\theta)$ is either non-increasing or non-decreasing or there is point $p \geq 0$ such that for $\theta \leq p ,\;$ $g(\theta)$ is non-decreasing, and for $\theta \geq p,\;$ $g(\theta)$ is non-increasing. That is, we show that 
\begin{enumerate}
    \item If $g^{'}(0) \leq 0$, then $g^{'}(\theta) \leq 0$ for all $\theta > 0$.
    
    \item If $g^{'}(0) \geq 0$, then either $g^{'}(\theta) \geq 0$ for all $\theta > 0$ or if $g^{'}(\theta)$ changes from positive to negative at some point $p \geq 0$, then $g^{'}(\theta) \leq 0$ for $\theta \geq p$.
\end{enumerate}

\noindent We simplify $g(\theta)$ as
\begin{linenomath}
\begin{equation*}
\begin{split}
g(\theta) & = \log\left(\displaystyle \prod_{j \in \mathcal{N}^{'}} \displaystyle \frac{\theta a_{j} + b_{j}}{\theta a_{0} + b_{0}} \right) = \displaystyle \sum_{j \in \mathcal{N}^{'}} \log\left(\displaystyle \frac{\theta a_{j} + b_{j}}{\theta a_{0} + b_{0}} \right)
\\[4 pt] & = \displaystyle \sum_{j \in \mathcal{N}^{'}} \left( \log(\theta a_{j} + b_{j}) - \log(\theta a_{0} + b_{0}) \right)
\\[4 pt] & = \displaystyle \sum_{j \in \mathcal{N}^{'}} \log(\theta a_{j} + b_{j}) - (n - i + 1) \log(\theta a_{0} +b_{0}) 
\end{split}
\end{equation*}
\end{linenomath}

\noindent Before considering the general case, let us examine $g(\theta)$ when $a_{0} = 0$. If $a_{0} = 0$, then $g(\theta)$ is expressed as a sum of log of affine functions as shown in Equation (1).
\begin{linenomath}
\begin{equation*}
g(\theta) = \displaystyle \sum_{j \in \mathcal{N}^{'}} \log(\theta a_{j} + b_{j}) - (n-i+1) \log(b_{0}) \quad \textnormal{(1)} 
\end{equation*}
\end{linenomath}

\noindent We know that the log of an affine function is concave and concavity is preserved under addition. Hence, $g(\theta)$ is concave and thus Quasiconcave.

\noindent Let $\mathcal{Z} \subseteq \mathcal{N}^{\;'}, \;$ $\mathcal{U} = \mathcal{N}^{\;'} \backslash \mathcal{Z}, \;$ $\zeta = |{\mathcal{Z}}|$, and $\;\overline{n} = (n - i + 1) - \zeta = |\mathcal{U}|$ such that $a_{j} = 0$ for $j \in \mathcal{Z}$, $a_{0} \neq 0$, and $a_{j} \neq 0$ for $j \in \mathcal{U}$. Using the chain rule of differentiation we have,
\begin{linenomath}
\begin{equation*}
\begin{split}
& \frac{d g(\theta)}{d \theta} = \displaystyle \sum_{j \in \mathcal{U}} \displaystyle \frac{a_{j}}{\theta a_{j} + b_{j}} - (n - i + 1) \displaystyle \frac{a_{0}}{\theta a_{0} + b_{0}} 
\\[4 pt] & = \displaystyle \sum_{j \in \mathcal{U}} \displaystyle \frac{1}{\theta + \displaystyle \frac{b_{j}}{a_{j}}} - (n - i + 1) \displaystyle \frac{1}{\theta + \displaystyle \frac{b_{0}}{a_{0}}}
\\[4 pt] & = \displaystyle \frac{\left(\theta + \displaystyle \frac{b_{0}}{a_{0}} \right) \displaystyle \sum_{j \in \mathcal{U}} \;\displaystyle \prod_{\ell \in \mathcal{U} \backslash j} \left( \theta + \frac{b_{\ell}}{a_{\ell}}\right) - (n - i + 1) \prod_{j \in \mathcal{U}} \left(\theta + \displaystyle \frac{b_{j}}{a_{j}} \right) }{\left( \theta + \displaystyle \frac{b_{0}}{a_{0}} \right) \displaystyle \prod_{j \in \mathcal{U}} \left(\theta + \displaystyle \frac{b_{j}}{a_{j}} \right)}
\\[4 pt] & = \displaystyle \frac{g^{'}_{\textnormal{num}}(\theta)}{g^{'}_{\textnormal{den}}(\theta)} \quad \textnormal{(2)}
\end{split}
\end{equation*}
\end{linenomath}
\noindent where,
\begin{linenomath}
\begin{equation*}
g^{'}_{\textnormal{den}}(\theta) = \left( \theta + \displaystyle \frac{b_{0}}{a_{0}} \right) \displaystyle \prod_{j \in \mathcal{U}} \left(\theta + \displaystyle \frac{b_{j}}{a_{j}} \right),
\end{equation*}
\end{linenomath}

\noindent and
\begin{linenomath}
\begin{equation*}
\begin{split}
& g^{'}_{\textnormal{num}}(\theta) \\ &= \left( \theta + \displaystyle \frac{b_{0}}{a_{0}} \right) \displaystyle \sum_{j \in \mathcal{U}} \displaystyle \prod_{\ell \in \mathcal{U} \backslash j} \left( \theta + \frac{b_{\ell}}{a_{\ell}}\right)  - (n - i + 1) \prod_{j \in \mathcal{U}} \left(\theta  + \displaystyle \frac{b_{j}}{a_{j}} \right) \;\textnormal{(3)}
\end{split}
\end{equation*}
\end{linenomath}
\noindent Since $a_{0} > 0, \;b_{0} \geq 0$, \; $b_{j} \geq 0$ for all $j \in \mathcal{N}^{\;'}$, and $f(\theta) > 0$ for $\theta \in [0,1]$, the denominator of Equation (2), that is,  $g^{'}_{\textnormal{den}}(\theta) > 0$. Hence, the sign of $g^{'}(\theta)$ is determined by the sign of $g^{'}_{\textnormal{num}}(\theta)$. Substituting $r_{0} = \displaystyle \frac{b_{0}}{a_{0}}$, and $r_{j} = \displaystyle \frac{b_{j}}{a_{j}}$ for $j \in \mathcal{U}$ into Equation (3) we get,
\begin{linenomath}
\begin{equation*}
\begin{split}
g^{'}_{\textnormal{num}}(\theta) = (\theta + r_{0}) \displaystyle \sum_{j \in \mathcal{U}} \displaystyle \prod_{\ell \in \mathcal{U} \backslash j} (\theta + r_{\ell}) - (n - i + 1)  \displaystyle \prod_{j \in \mathcal{U}} (\theta + r_{j}) \\\textnormal{(4)}.
\end{split}
\end{equation*}
\end{linenomath}

\noindent The function $g^{'}_{\textnormal{num}}(\theta)$ is a polynomial of degree $\overline{n}$. Let $e_{k}$ denote the \textit{k}-th elementary symmetric function in $r_{i}, r_{i+1}, ..., r_{n}$, then
\begin{linenomath}
\begin{equation*}
e_{k} \triangleq \displaystyle \sum_{1 \leq j_{1} < j_{2} < ...< j_{k} \leq n} r_{j_{1}} r_{j_{2}} \cdot \cdot \cdot r_{j_{k}}.    
\end{equation*}
\end{linenomath}

\noindent For example,
\begin{linenomath}
\begin{equation*}
\begin{split}
e_{0} & = 1 \\
e_{1} & = \displaystyle \sum_{i \leq j \leq n} r_{j} \\
& \vdots\\
e_{n-(i-1)} & = r_{i}r_{i+1} \cdots r_{n}\\
\end{split}    
\end{equation*}
\end{linenomath}

\noindent Now, we use the elementary symmetric functions $e_{k}$ in expressing the coefficients of the polynomial $g^{'}_{\textnormal{num}}(\theta)$. Expanding $g^{'}_{\textnormal{num}}(\theta)$ in Equation (4) gives us
\begin{linenomath}
\begin{equation*}
g^{'}_{\textnormal{num}}(\theta) = z_{0} \theta^{0} + z_{1} \theta + ... + z_{\overline{n}-1} \theta^{\overline{n}-1} + (-\zeta) \theta^{\overline{n}}    
\end{equation*}
\end{linenomath}

\noindent where for $0 \leq \; k \leq \; \overline{n} -1$, and
\begin{linenomath}
\begin{equation*}
z_{k} = (k+1)\; r_{0} \; e_{\overline{n}-(k+1)}  - (\overline{n} + \zeta -k) \; e_{\overline{n}-k} \quad \textnormal{(5)}    
\end{equation*}
\end{linenomath}

\noindent As $g^{'}_{\textnormal{num}}(\theta)$ is a continuous function, it can only change signs at points where $g^{'}_{\textnormal{num}}(\theta) = 0$. Hence, we need to show that if $g^{'}_{\textnormal{num}}(0) < 0$, then there are no roots for $g^{'}_{\textnormal{num}}(\theta) = 0$. On the other hand, if  $g^{'}_{\textnormal{num}}(0) > 0$ and changes sign from positive to negative at some point $p \geq 0$, then there are no roots other than $\theta = p$ for $g^{'}_{\textnormal{num}}(\theta) = 0$. To examine the number of roots of the polynomial $g^{'}_{\textnormal{num}}(\theta)$, we apply Descartes' Rule of Signs which is stated as follows. If the terms of a polynomial with real coefficients are ordered from lowest degree variable to the highest degree variable, then the number of positive real roots of the polynomial is equal to the number of sign changes between consecutive coefficients. Therefore, 
\begin{enumerate}
    \item If $g^{'}_{\textnormal{num}}(0) = z_{0} < 0$, then we have to show that the remaining coefficients of the polynomial are negative, that is, $z_{1}, z_{2}, ..., z_{\overline{n}-1} \leq 0$.  
    
    \item If $g^{'}_{\textnormal{num}}(0) > 0$, and it changes sign at a point $\theta = p$ meaning there is a sign change from positive to negative at some coefficient of $\theta^{k}$. Then, we need to show that the remaining coefficients of the polynomial are negative, that is,  \\$z_{k+1}, z_{k+2}, ..., z_{\overline{n}-1} \leq 0$. 
\end{enumerate}

\noindent We observe that the first condition can be proved by substituting $k = 0$ in the second condition. From Equation (5), the condition of a negative coefficient can be equivalently expressed as a bound $s_{k}$ on $r_{0}$, that is,\\
\begin{linenomath}
\begin{equation*}
z_{k} \leq 0 \iff r_{0} \leq \displaystyle \frac{(\overline{n} + \zeta - k) \; e_{\overline{n} - k}}{(k+1) \; e_{\overline{n}-(k+1)}} \; \triangleq \; s_{k}    
\end{equation*}
\end{linenomath}

\noindent Since, the condition of a negative coefficient can be expressed as a bound on $r_{0}$, the remaining coefficients of $g^{'}_{\textnormal{num}}(\theta)$ are negative if the bound $s_{k}$ is tighter than the bounds $s_{k+1}, s_{k+2}, ..., s_{\overline{n}-1}$ placed by the remaining coefficients. A sufficient condition for this is achieved when the sequence $s_{k}$ is monotone non-decreasing. We show this by proving $s_{k-1} \leq s_{k}$, which is expressed as\\
\begin{linenomath}
\begin{equation*}
\displaystyle \frac{(\overline{n} + \zeta -(k-1))\; e_{\overline{n} - (k-1)}}{k \cdot e_{\overline{n} - k}} \leq \displaystyle \frac{(\overline{n} + \zeta - k) \; e_{\overline{n} - k}}{(k+1) e_{\overline{n} - (k+1)}} 
\end{equation*}
\end{linenomath}

\noindent The above inequality simplifies to,
\begin{linenomath}
\begin{equation*}
(\overline{n} + \zeta - k + 1) (k+1) e_{\overline{n} - k + 1} e_{\overline{n} - k - 1} \leq k (\overline{n} + \zeta - k) (e_{\overline{n}-k})^{2}    
\end{equation*}
\end{linenomath}

\noindent Substituting $\overline{n} - k = y$ we get,
\begin{linenomath}
\begin{equation*}
\begin{split}
(y + \zeta + 1) (\overline{n}- y+1) e_{y+1} \cdot e_{y-1} \leq (\overline{n} - y) (y + \zeta) (e_{y})^{2} \; \textnormal{; or,}
\end{split}
\end{equation*}
\end{linenomath}

\begin{linenomath}
\begin{equation*}
\displaystyle \frac{(e_{y})^{2}}{e_{y+1} e_{y-1}} \geq  \displaystyle \frac{(\overline{n} - y +1) (y + \zeta + 1)}{(y + \zeta)(\overline{n} - y)} \quad \textnormal{(6)}    
\end{equation*}
\end{linenomath}

\noindent Newton's inequality on elementary symmetric functions $e_{y}$ states
\begin{linenomath}
\begin{equation*}
\displaystyle \frac{(e_{y})^{2}}{e_{y+1} e_{y-1}} \geq  \displaystyle \frac{(\overline{n} - y +1) (y + 1)}{ y (\overline{n} - y)}.
\end{equation*}
\end{linenomath}

\noindent Newton's inequality is sufficient to show that Equation (6) is satisfied if,
\begin{linenomath}
\begin{equation*}
\begin{split}
\displaystyle \frac{(\overline{n} - y +1) (y + 1)}{ y (\overline{n} - y)} & \geq \displaystyle \frac{(\overline{n} - y +1) (y + \zeta + 1)}{(y + \zeta)(\overline{n} - y)} \; \textnormal{; or} \\[4 pt]
(y + 1) (y + \zeta) \; & \geq \; y (y + \zeta + 1) \quad \textnormal{; or}\\
\zeta & \geq 0\
\end{split}    
\end{equation*}
\end{linenomath}

\end{proof}

As mentioned in the main manuscript, in addition to resource constraints, the LCP optimization is based on optimizing on one time interval while holding the others constant. Hence, we also have constraints to ensure that all agents finish in the correct order. 

\begin{customexample}{1} \label{lcp-tie}

There exist LCP allocations where in some time intervals the allocations of agents are not the extreme points of the Pareto frontier formed by the set of participating agents in that interval. The allocations of agents in these intervals are the extreme points of the feasible set where the constraints of completion times are tight, which represent ties. 

\noindent Consider the demand matrix $\mathbf{D}$, tasks matrix $\mathbf{K}$ for a set of agents $\mathcal{N} = \{1,2, ...,n\}$ as follows,\\

$\mathbf{D} = \begin{bmatrix} 1 \quad & \dfrac{2}{3} \quad & \epsilon 
\\[6 pt] \epsilon \quad & \dfrac{1}{2} \quad & 1
\\[6 pt] 1    \quad & 1 \quad & 1 
\\ 1    \quad & 1 \quad & 1
\\ \vdots \quad & \vdots \quad & \vdots
\\ 1    \quad & 1 \quad & 1\end{bmatrix}_{n \times 3}, \quad$\\[6 pt]

$\mathbf{K} = \begin{bmatrix} & 1 \quad & 0 \quad & 0 \quad & 0 \quad \cdots \quad & 0 
\\ & 0 \quad & 1 \quad & 0 \quad & 0 \quad \cdots \quad & 0
\\ & 0 \quad & 0 \quad & 5 \quad & 0 \quad \cdots \quad & 0 
\\ & 0 \quad & 0 \quad & 0 \quad & 5 \quad \cdots \quad & 0
\\ & \vdots \quad & \vdots \quad & \vdots \quad & \vdots \quad & \vdots
\\& 0 \quad & 0 \quad  & 0 \quad & 0 \quad \cdots \quad & 5 
\end{bmatrix}_{n \times n}$\\[6 pt]

\noindent where $0 < \epsilon << 1$ is a very small non-zero positive constant. For this example, we consider $\epsilon = 0$ since $\epsilon > 0$ alters the solution by a negligible amount. 

Agents $i \geq 3$ have a dominant demand for each of the three resources and a large amount of work to complete compared to agents $ i = 1$ and $ i = 2$. Hence, agents $i \geq 3$ will be allocated resources serially after agents $i = 1$ and  $i = 2$, and the completion times of agents $ i = 1$ and $ i = 2$ must be as small as possible in order to minimize the cost product. We show that when $n$ is sufficiently large the minimum cost product is obtained by minimizing the makespan of agents $ i = 1$ and $ i = 2$. 

Let $\mathbf{A}^{\textnormal{tie}}(\cdot)$ denote the allocation that minimizes the makespan for the agents $i = 1$ and  $i = 2$. The allocation $\mathbf{A}^{\textnormal{tie}}(\cdot)$ is obtained with $\lambda_{11} = \lambda_{21} = \dfrac{6}{7} \;$ causing both the agents to have the same completion time of $t_{1} = t_{2} = \dfrac{7}{6}$. The allocation $\mathbf{A}^{\textnormal{tie}}(\cdot)$ during the first interval is as follows.
\begin{linenomath}
\begin{equation*}
\begin{split}
\mathbf{A}^{\textnormal{tie}}_{1}(t) = \langle \dfrac{6}{7}, \; \dfrac{4}{7}, \;  0 \rangle, \quad \forall t \in [0, \; \dfrac{7}{6}) \\[4 pt]
\mathbf{A}^{\textnormal{tie}}_{2}(t) = \langle 0, \; \dfrac{3}{7}, \;  \dfrac{6}{7} \rangle, \quad \forall t \in [0, \; \dfrac{7}{6})
\end{split}    
\end{equation*}
\end{linenomath}

\noindent In the allocation $\mathbf{A}^{\textnormal{tie}}(\cdot)$ only the second resource gets saturated in the first time interval.

There exists two allocations that have a lower cost product when considering only agents $i = 1$ and $i = 2$ that correspond to the allocations obtained from the extreme points of the Pareto frontier. However, the completion time of the second agent to complete its work in the two allocations is greater than the completion of the second agent in $\mathbf{A}^{\textnormal{tie}}(\cdot)$. These two allocations are,
\begin{itemize}
    \item Let $\mathbf{A}^{\textnormal{extreme-1}}(\cdot)$ denote the allocation obtained from saturating the first and the second resource with $\lambda_{11} = 1$ and $\lambda_{21} = \dfrac{2}{3}$. The completion times of the agents in $\mathbf{A}^{\textnormal{extreme-1}}(\cdot)$ are $t_{1} = 1$ and $t_{2} = \dfrac{4}{3}$.\\
    \begin{linenomath}
    \begin{equation*}
    \begin{split}
    \mathbf{A}^{\textnormal{extreme-1}}_{1}(t) & = \langle 1, \; \dfrac{2}{3}, \;  0 \rangle, \quad \forall t \in [0, \; 1)\\[4 pt] 
    \mathbf{A}^{\textnormal{extreme-1}}_{2}(t) & = \begin{cases} \langle 0, \; \dfrac{1}{3}, \; \dfrac{2}{3} \rangle, \quad \forall t \in [0, \; 1)\\[8 pt] \langle 0, \; \dfrac{1}{2}, \; 1\rangle, \quad \forall t \in [1, \; \dfrac{4}{3})\\[4 pt] \end{cases}\\   
    \end{split}    
    \end{equation*}
    \end{linenomath}
    
    \item Let $\mathbf{A}^{\textnormal{extreme-2}}(\cdot)$ denote the allocation obtained from saturating the second and the third resource with $\lambda_{11} = \dfrac{3}{4}$ and $\lambda_{21} = 1$. The completion times of the agents in $\mathbf{A}^{\textnormal{extreme-2}}(\cdot)$ are $t_{2} = 1$ and $t_{1} = \dfrac{5}{4}$.\\
    \begin{linenomath}
    \begin{equation*}
    \begin{split}
    \mathbf{A}^{\textnormal{extreme-2}}_{1}(t) & = \begin{cases} \langle \dfrac{3}{4}, \; \dfrac{1}{2}, \; 0 \rangle, \quad \forall t \in [0, \; 1)\\[8 pt] 
    \langle 1, \; \dfrac{2}{3}, \; 0\rangle, \quad \forall t \in [1, \; \dfrac{5}{4}) \\[8 pt] \end{cases} \\[8 pt]
    \mathbf{A}^{\textnormal{extreme-2}}_{2}(t) & = \langle 0, \; \dfrac{1}{2}, \;  1 \rangle, \quad \forall t \in [0, \; 1)\\[4 pt]
    \end{split}    
    \end{equation*}
    \end{linenomath}
    
\end{itemize}

The allocations of agents $i \geq 3$ in the LCP allocation will be in a serial order after the work of agents $i = 1$ and $i = 2$ have been completed. We show that for a sufficiently large $n$ the allocation $\mathbf{A}^{\textnormal{tie}}(\cdot)$ results in a lower log cost product when compared to the allocations formed by the extreme points of the Pareto frontier, that is, $\mathbf{A}^{\textnormal{extreme-1}}(\cdot)$ and $\mathbf{A}^{\textnormal{extreme-2}}(\cdot)$. The log cost product of the allocations $\mathbf{A}^{\textnormal{tie}}(\cdot)\;$, $\mathbf{A}^{\textnormal{extreme-1}}(\cdot)\;$, and $\mathbf{A}^{\textnormal{extreme-2}}(\cdot)\;$ are listed as follows.
\begin{linenomath}
\begin{equation*}
\log(\mathbf{A}^{\textnormal{tie}}(\cdot)) = \log(\dfrac{7}{6}) + \log(\dfrac{7}{6}) + \displaystyle \sum_{i = 3}^{n} \log(5 * (i -2) + \dfrac{7}{6})
\end{equation*}
\end{linenomath}
\begin{linenomath}
\begin{equation*}
    \log(\mathbf{A}^{\textnormal{extreme-1}}(\cdot)) = \log(1) + \log(\dfrac{4}{3}) + \displaystyle \sum_{i = 3}^{n} \log(5 * (i -2) + \dfrac{4}{3}).
\end{equation*}
\end{linenomath}
\begin{linenomath}
\begin{equation*}
    \log(\mathbf{A}^{\textnormal{extreme-2}}(\cdot)) = \log(1) + \log(\dfrac{5}{4}) + \displaystyle \sum_{i = 3}^{n} \log(5 * (i -2) + \dfrac{5}{4}) \\[4 pt]
\end{equation*}
\end{linenomath}

Let $\mathbf{A}^{\textnormal{extreme}}(\cdot)$ denote both $\mathbf{A}^{\textnormal{extreme-1}}(\cdot)$ and $\mathbf{A}^{\textnormal{extreme-2}}(\cdot)$. Since, 
\begin{linenomath}
\begin{equation*}
    \displaystyle \dfrac{\mathbf{A}^{\textnormal{tie}}(\cdot)}{\mathbf{A}^{\textnormal{extreme}}(\cdot)}  <  1  \iff \log(\mathbf{A}^{\textnormal{tie}}(\cdot)) - \log(\mathbf{A}^{\textnormal{extreme}}(\cdot)) < 0
\end{equation*}
\end{linenomath}

\noindent we programmed the values of the completion times of the allocations $\mathbf{A}^{\textnormal{tie}}(\cdot), \;$ $\mathbf{A}^{\textnormal{extreme-1}}(\cdot),\;$ and $\mathbf{A}^{\textnormal{extreme-2}}(\cdot)\;$ for increasing values of $n$. We found that 
\begin{linenomath}
\begin{equation*}
\begin{split}
\log(\mathbf{A}^{\textnormal{tie}}(\cdot)) - \log(\mathbf{A}^{\textnormal{extreme-1}}(\cdot)) & < 0, \quad \forall n \geq 3 \\[2 pt]
\log(\mathbf{A}^{\textnormal{tie}}(\cdot)) - \log(\mathbf{A}^{\textnormal{extreme-2}}(\cdot)) & < 0, \quad \forall n \geq 132\\[8 pt]
\end{split}
\end{equation*}
\end{linenomath}

\end{customexample}

\begin{customlemma}{4}
The LCP mechanism violates SP.
\end{customlemma}

\begin{proof}
We provide an example illustrating that the LCP mechanism is not Strategy-Proof. We show that an agent is able to reduce its cost by misreporting its demand. Consider a set of two agents, $\mathcal{N} = \{1, 2\}$.\\

\noindent \textit{Costs when Agents Report True Demands}:\\
\noindent Demands, tasks, and amount of work required to be completed are as follows:\\
$\mathbf{D} = \begin{bmatrix} \dfrac{1}{2} \quad 1 \\1 \quad \dfrac{1}{6} \end{bmatrix}, \quad$
$\mathbf{K} = \begin{bmatrix} 1 \quad 0 \\0 \quad 1 \end{bmatrix}, \quad \textnormal{and}$
$\mathbf{W} = \begin{bmatrix} \dfrac{1}{2} \quad 1 \\1 \quad \dfrac{1}{6} \end{bmatrix}$.\\[2 pt]

\noindent The output allocation from the LCP mechanism can be obtained by solving the optimization problem,
\begin{linenomath}
\begin{align*}
\min  \quad  &  \dfrac{1}{\lambda_{1,1}}\cdot \big( 1 - \dfrac{\lambda_{2,1}}{\lambda_{1,1}} + \dfrac{1}{\lambda_{1,1}}  \big) 
\\ \textnormal{subject to} \quad  & \dfrac{\lambda_{1,1}}{2} + \lambda_{2,1} \leq 1, 
\\\quad & \lambda_{1,1} + \dfrac{\lambda_{2,1}}{6} \leq 1,
\end{align*}
\end{linenomath}

\noindent where $\lambda_{1,1},\; \lambda_{2,1}$ are the dominant shares of agent 1 and 2 respectively. The solution is $\lambda_{1,1} = \dfrac{10}{11}$, and $\lambda_{2,1} = \dfrac{6}{11}$ till agent 1 completes its work. Later, agent 2 would be allocated all the resources till it finishes the remaining amount of work, that is, $\lambda_{2,2} = 1$. The allocations of the two agents are,
\begin{linenomath}
\begin{equation*}
\begin{split}
\mathbf{A}^{\textnormal{LCP}}_{1}(t) & = \langle \dfrac{5}{11}, \; \dfrac{10}{11} \rangle, \quad \forall t \in [0, \; \dfrac{11}{10}) \\[4 pt]    
\mathbf{A}^{\textnormal{LCP}}_{2}(t) & = \begin{cases}
\langle \dfrac{6}{11}, \; \dfrac{1}{11} \rangle, \quad \forall t \in [0, \; \dfrac{11}{10})\\[8 pt]
\langle 1, \; \dfrac{1}{6} \rangle, \quad \forall t \in [\dfrac{11}{10}, \; \dfrac{15}{10}] \\[8 pt]
\end{cases}
\end{split}    
\end{equation*}
\end{linenomath}

\noindent The above allocation results in the following costs, \\ $c_{1}(\mathbf{A}^{\textnormal{LCP}}_{1}(\cdot)) = \dfrac{11}{10}$, and $c_{2}(\mathbf{A}^{\textnormal{LCP}}_{2}(\cdot)) = 1 - \dfrac{6}{10} + \dfrac{11}{10} = \dfrac{15}{10}$. \\[12 pt]

\noindent \textit{Costs when an Agent Misreports its Demands}:\\ \noindent Now, agent 1 changes it's demand to $\langle \dfrac{2}{3},\; 1 \rangle$ resulting in the following misreported demand matrix, and amount of work required to be completed, \\
$\mathbf{D}^{'} = \begin{bmatrix} \dfrac{2}{3} \quad 1 \\1 \quad \dfrac{1}{6} \end{bmatrix}, \quad \textnormal{and}$
$\mathbf{W}^{'} = \begin{bmatrix} \dfrac{2}{3} \quad 1 \\1 \quad \dfrac{1}{6} \end{bmatrix}$.\\

The objective of the optimization stays the same as in the case when original demands were reported, but the constraints change. 
\begin{linenomath}
\begin{align*}
\min \quad & \dfrac{1}{\lambda_{1,1}}\cdot \big( 1 - \dfrac{\lambda_{2,1}}{\lambda_{1,1}} + \dfrac{1}{\lambda_{1,1}}  \big) 
\\ \textnormal{subject to} \quad & \dfrac{\lambda_{2,1}}{3} + \lambda_{2,1} \leq 1, 
\\\quad & \lambda_{1,1} + \dfrac{\lambda_{2,1}}{6} \leq 1,
\end{align*}.
\end{linenomath}

\noindent The solution is, $\lambda^{'}_{1,1} = \dfrac{15}{16}$, and $\lambda^{'}_{2,1} = \dfrac{3}{8}$ till agent 1 completes its work.  Then, $\lambda^{'}_{2,2} = 1$ till agent 2 completes its remaining work. The allocation due to misreported demand is depicted as follows, \\
\begin{linenomath}
\begin{equation*}
\begin{split}
\mathbf{A}^{*\textnormal{LCP}}_{1}(t) & = \langle \dfrac{5}{8}, \; \dfrac{15}{16} \rangle, \quad \forall t \in [0, \; \dfrac{16}{15}) \\[6 pt]
\mathbf{A}^{*\textnormal{LCP}}_{2}(t) & = \begin{cases}
\langle \dfrac{3}{8}, \; \dfrac{1}{16} \rangle, \quad \forall t \in [0, \; \dfrac{16}{15}) \\[8 pt]
\langle 1, \; \dfrac{1}{6} \rangle, \quad \forall t \in [\dfrac{16}{15}, \; \dfrac{5}{3}] \\
\end{cases}\\[6 pt] 
\end{split}    
\end{equation*}
\end{linenomath}

\noindent The allocations  due to the misreported demands have the following costs,\\ $c_{1}(\mathbf{A}^{*\textnormal{LCP}}_{1}(\cdot)) = \dfrac{16}{15}$,  and $c_{2}(\mathbf{A}^{*\textnormal{LCP}}_{2}(\cdot)) = 1 - \dfrac{2}{5} + \dfrac{16}{15} = \dfrac{5}{3}$. \\[6 pt]

\noindent We observe that agent 1 is able to decrease its cost by misreporting its demands, $c_{1}(\mathbf{A}^{*\textnormal{LCP}}_{1}(\cdot)) < c_{1}(\mathbf{A}^{\textnormal{LCP}}_{1}(\cdot))$.

\end{proof}

\subsection{Envy-freeness of LCP}

We will first present the LCP allocation which is not envy-free, even in expectation.

\begin{customlemma}{5}
The LCP mechanism does not satisfy EF (even in expectation) when the resource allocation involves three or more agents and two or more resources. 
\end{customlemma}

\begin{proof}
We provide an example an LCP allocation among three agents that exhibits envy. Consider the following demand matrix $\mathbf{D}$, and tasks matrix $\mathbf{K}$ for a set of three agents $\mathcal{N} = \{1, 2, 3\}$, \\

\noindent $\mathbf{D} = \begin{bmatrix} 1 \quad & 1 \\
1 \quad & \epsilon \\
\epsilon \quad & 1\end{bmatrix} ,\quad$\\[2 pt]

\noindent $\mathbf{K} = \begin{bmatrix} 1 \quad & 0 \quad & 0 \\
0 \quad & 1 \quad & 0 \\
0 \quad & 0 \quad & k_{3} \\\end{bmatrix}, \quad$\\[2 pt]

\noindent where $\epsilon << 1$ and $k_{3} > 2$.

\noindent We apply Theorem \ref{cost-prodLCP} and examine the possible LCP allocations.

\begin{itemize}
    \item The Shortest Job First allocation $\mathbf{A}^{\textnormal{SJF}}(\cdot)$. The agents are allocated resources in the sorted order of their tasks. The completion times of the agents are $t_{1} = 1,\; t_{2} = 1 + 1  = 2$, and $t_{3} = 2 + k_{3}$. The cost product of the allocation $\mathbf{A}^{\textnormal{SJF}}(\cdot)$ is
    \begin{linenomath}
    \begin{equation*}
    CP(\mathbf{A}^{\textnormal{SJF}}(\cdot)) = 1 \cdot 2 \cdot (2 + k_{3}) = 4 + 2 k_{3}.
    \end{equation*}
    \end{linenomath}
    
    \item The allocation $\mathbf{A}^{\textnormal{s-1}}(\cdot)$ where agents 2 and 3 share in the first interval, agent 1 is allocated in the second time interval, and remaining work of agent 3 is completed in the third interval. The completion times of the agents are $t_{2} = 1 + \epsilon, \; t_{1} = 1 + 1 + \epsilon = 2 + \epsilon, \;$ and $t_{3} = 2 + \epsilon + k_{3} - 1$. The cost product of the allocation $\mathbf{A}^{\textnormal{s-1}}(\cdot)$ is
    \begin{linenomath}
    \begin{equation*}
    CP(\mathbf{A}^{\textnormal{s-1}}(\cdot)) = (1+ \epsilon) \cdot (2 + \epsilon) \cdot (k_{3} + \epsilon + 1).
    \end{equation*}
    \end{linenomath}
    
    \item The allocation $\mathbf{A}^{\textnormal{s-2}}(\cdot)$ where agents 2 and 3 share in the first interval, agent 3 completes its remaining work in the second time interval, and agent 1 is allocated in the third interval. The completion times of the agents are $t_{2} = 1 + \epsilon, \; t_{3} = 1 + \epsilon + k_{3} - 1 = k_{3} + \epsilon, \;$ and $t_{1} = 1 + \epsilon + k_{3}$. The cost product of the allocation $\mathbf{A}^{\textnormal{s-2}}(\cdot)$ is
    \begin{linenomath}
    \begin{equation*}
    CP(\mathbf{A}^{\textnormal{s-2}}(\cdot)) = (1+ \epsilon) \cdot (k_{3} + \epsilon) \cdot (k_{3} + \epsilon + 1).     
    \end{equation*}
    \end{linenomath}
    
    \item The allocation $\mathbf{A}^{\textnormal{s-3}}(\cdot)$ where agent 1 is allocated in the first interval, agents 2 and 3 share in the second time interval, and agent 3 completes its remaining work in the third interval. The completion times of the agents are $t_{1} = 1, \; t_{2} =  1 + 1 + \epsilon = 2 + \epsilon, \;$ and $t_{3} = 2 + \epsilon + k_{3} - 1 = 1 + \epsilon + k_{3}$. The cost product of the allocation $\mathbf{A}^{\textnormal{s-3}}(\cdot)$ is
    \begin{linenomath}
    \begin{equation*}
    CP(\mathbf{A}^{\textnormal{s-3}}(\cdot)) = (1) \cdot (2 + \epsilon) \cdot (1 + \epsilon + k_{3}).     
    \end{equation*}
    \end{linenomath}
    
\end{itemize}

\noindent If $\epsilon << 1$ and $k_{3} > 2 \;$ such that
$k_{3} < \dfrac{2}{\epsilon} - 3 - \epsilon \;$
then, \\
\begin{linenomath}
\begin{equation*}
    CP(\mathbf{A}^{\textnormal{s-3}}(\cdot)) < CP(\mathbf{A}^{\textnormal{SJF}}(\cdot)) <  CP(\mathbf{A}^{\textnormal{s-1}}(\cdot)) < CP(\mathbf{A}^{\textnormal{s-2}}(\cdot)).\\
\end{equation*}
\end{linenomath}

\noindent The LCP allocation is $\mathbf{A}^{\textnormal{s-3}}(\cdot)$ where agent 2 envies agent 1 because $c_{2}(\mathbf{A}^{\textnormal{LCP}}_{1}(\cdot)) = 1 <  c_{2}(\mathbf{A}^{\textnormal{LCP}}_{2}(\cdot)) = 2 + \epsilon$.

\end{proof}

We study the LCP mechanism in detail for the two cases where the LCP allocations satisfy envy-freeness in expectation as per Lemma \ref{lcp-1res_and_2agents}.

\subsection{LCP Mechanism for Single Resource Allocations} \label{single-resource-lcp}

In a single resource allocation problem, all \textit{n} agents have the same demand but different amounts of work to be completed. The model reduces to  an offline allocation problem similar to the work of \citeauthor{friedman2003fairness} \shortcite{friedman2003fairness} who study the online model of allocating a single resource, where an agent arrives dynamically with their required service time. We present the results derived from applying the LCP mechanism on single resource allocations.

\begin{customlemma}{A.1} \label{sjf-lcp}
For single resource allocations, the LCP mechanism allocates the resource completely to the agents in a serial order based on the increasing amounts of work required to be completed. In other words, the allocations are of the form of the Shortest Job First (SJF).
\end{customlemma}

\begin{proof}
We first show that the LCP allocation is a form of serial allocation where a single resource is allocated completely to the agents in a serial order. Next, we show that the serial order must be a sorted increasing order based on the amount of work required to be completed by each agent.

Let us assume for the sake of contradiction that the LCP allocation $\mathbf{A}^{\textnormal{LCP}}(\cdot)$ is a form of shared allocation where one or more agents share the resource with other agents in any manner. Without loss of generality, let the completion times of the agents in the shared LCP allocation be in the order $t_{1} \leq t_{2} \leq ... \leq t_{n}$. That is, $t_{1}$ denotes the completion time of the agent that completes first under the shared LCP allocation, $t_{2}$ denotes the completion time of the agent that completes second, and so on.

We now transform this allocation into another allocation where there is no sharing among the agents. This transformation is done by allocating all of the shared resource completely to an agent based on the order of the completion times in the shared LCP allocation. This means, if $t_{i} < t_{j}$ in the shared LCP allocation, the resource is completely allocated to agent \textit{i} before allocating them to agent \textit{j}. Let the completion time of an agent \textit{i} in the transformed allocation be represented as $t^{'}_{i}$. In order to show that this transformed allocation achieves a lower cost product we need to prove,
\\$t^{'}_{i} \leq t_{i}, \quad \forall i \in \mathcal{N}$.

If $\mathbf{w}_{k}$ denotes the amount of work that is required to be completed by an agent \textit{k}, we know that,
\begin{linenomath}
\begin{equation*}
\begin{split}
\mathbf{w}_{k} = \displaystyle \int^{t_{k}}_{0} \mathbf{A}_{k}(t)dt \quad \textnormal{or,}\\
\mathbf{w}_{k} = \displaystyle \sum^{k}_{j = 1} \Delta t_{j} A^{\Delta t_{j}}_{k} \quad,
\end{split}    
\end{equation*}
\end{linenomath}

\noindent where $A^{\Delta t_{j}}_{k}$ represents the resource allocated by the shared LCP mechanism to the agent \textit{k} for the interval $[t_{j-1}, \; t_{j})$. Summing the amount of work completed by the first \textit{i} agents we get,\\
\begin{linenomath}
\begin{equation*}
\begin{split}
\displaystyle \sum^{i}_{k=1} \mathbf{w}_{k} & = \sum^{i}_{k=1} \; \sum^{k}_{j=1} \Delta t_{j} A^{\Delta t_{j}}_{k} \quad \textnormal{or,}\\
\displaystyle \sum^{i}_{k=1} \mathbf{w}_{k}  & \leq \; \sum^{i}_{k=1} \; \sum^{i}_{j=1} \Delta t_{j} A^{\Delta t_{j}}_{k} \quad \textnormal{or,}\\
\displaystyle \sum^{i}_{k=1} \mathbf{w}_{k} & \leq \;  \sum^{i}_{j=1} \Delta t_{j} \; \sum^{i}_{k=1} A^{\Delta t_{j}}_{k} \quad \textnormal{or, }\\
\displaystyle \sum^{i}_{k=1} \mathbf{w}_{k} & \leq \;  \sum^{i}_{j=1} \Delta t_{j} \quad \textnormal{or,}\\
\displaystyle \sum^{i}_{k=1} \mathbf{w}_{k} & \leq  \; t_{i}
\end{split}    
\end{equation*}
\end{linenomath}

\noindent since $\displaystyle \sum^{i}_{k=1} A^{\Delta t_{j}}_{k} \; \leq \; 1$ over each $\Delta t_{j}$ and $\displaystyle \sum^{i}_{j=1} \Delta t_{j} = t_{i} $. 

Hence, when only a single resource is allocated the lower bound on the completion time of an agent \textit{i} is obtained as $t_{i} \; \geq \; \sum^{i}_{k=1}\mathbf{w}_{k}$. We observe that under the transformation described above, the resource is completely allocated to the agents in a serial order, that is, $A^{\Delta t_{i}}_{i} = 1$ for agent \textit{i} and $A^{\Delta t_{j}}_{i} = 0$ for $j \neq i$. This achieves the lower bound for every agent because $t^{'}_{i} = \sum^{i}_{k=1}\mathbf{w}_{k} \leq t_{i}, \quad \forall i \in \mathcal{N}$. Hence, the LCP allocation is in the form of a serial resource allocation.  

Now, we move on to showing that the serial order must be in the form of Shortest Job First. For the sake of contradiction, let us assume that the LCP serial allocation is denoted by the sequence $i^{\textnormal{lcp}}$, which is an arbitrary sequence with no regard to the amount of work required to be completed by an agent. Then, the agent completion times will be in an order denoted as follows, $t_{i^{\textnormal{lcp}}_{1}} \leq t_{i^{\textnormal{lcp}}_{2}} \leq ... \leq t_{i^{\textnormal{lcp}}_{n}}$. Since the sequence $i^{\textnormal{lcp}}$ is not based on the amount of work required to be completed by an agent, there exists a position \textit{p} in the sequence such that, $\mathbf{w}_{i^{\textnormal{lcp}}_{p-1}} \geq \mathbf{w}_{i^{\textnormal{lcp}}_{p}}$. We can construct a new serial allocation denoted by the sequence $i^{\textnormal{s}}$ as follows. The sequence $i^{\textnormal{s}}$ is almost the same as $i^{\textnormal{lcp}}$ except the agents at positions $p-1$ and $p$ are swapped. Mathematically,
\begin{linenomath}
\begin{equation*}
\begin{split}
i^{\textnormal{s}}_{j} = \begin{cases}
i^{\textnormal{lcp}}_{j}, \quad j = 1,2, ..., p-2 \\[4 pt]
i^{\textnormal{lcp}}_{p}, \quad j = p-1 \\[4 pt]
i^{\textnormal{lcp}}_{p-1}, \quad j = p \\[4 pt]
i^{\textnormal{lcp}}_{j}, \quad j = p+1, ..., n \\
\end{cases}\\[4 pt]    
\end{split}    
\end{equation*}
\end{linenomath}

\noindent From the previous portion of the proof, we know that the completion time of an agent in the assumed LCP serial allocation is given by
\begin{linenomath}
\begin{equation*}
    t_{i^{\textnormal{lcp}}_{j}} =\displaystyle \sum^{i^{\textnormal{lcp}}_{j}}_{k=i^{\textnormal{lcp}}_{1}} \mathbf{w}_{k}
\end{equation*}
\end{linenomath}

\noindent The completion time of the agents under the newly constructed serial allocation $i^{\textnormal{s}}$ till position $p-2$ is the same as the LCP allocation. That is, $t_{i^{\textnormal{s}}_{j}} = t_{i^{\textnormal{lcp}}_{j}}, \quad j = 1, ..., p-2$. But at position $p-1$ we have,
\begin{linenomath}
\begin{equation*}
\begin{split}
t_{i^{\textnormal{s}}_{p-1}} & = \displaystyle \sum^{i^{\textnormal{s}}_{p-1}}_{k=i^{\textnormal{s}}_{1}} \mathbf{w}_{k} \quad  \textnormal{or,}\\
t_{i^{\textnormal{s}}_{p-1}} & = \displaystyle \sum^{i^{\textnormal{s}}_{p-2}}_{k=i^{\textnormal{s}}_{1}} \mathbf{w}_{k} + \mathbf{w}_{i^{\textnormal{s}}_{p-1}} \quad \textnormal{or,}\\
t_{i^{\textnormal{s}}_{p-1}} & = \displaystyle \sum^{i^{\textnormal{lcp}}_{p-2}}_{k=i^{\textnormal{lcp}}_{1}} \mathbf{w}_{k} + \mathbf{w}_{i^{\textnormal{lcp}}_{p}} \quad \textnormal{or,}\\
t_{i^{\textnormal{s}}_{p-1}} & \leq \displaystyle \sum^{i^{\textnormal{lcp}}_{p-2}}_{k=i^{\textnormal{lcp}}_{1}} \mathbf{w}_{k} + \mathbf{w}_{i^{\textnormal{lcp}}_{p-1}} \quad \textnormal{or,}\\
t_{i^{\textnormal{s}}_{p-1}} & \leq \displaystyle \sum^{i^{\textnormal{lcp}}_{p-1}}_{k=i^{\textnormal{lcp}}_{1}} \mathbf{w}_{k} \quad \textnormal{or,}\\
t_{i^{\textnormal{s}}_{p-1}} & \leq t_{i^{\textnormal{lcp}}_{p-1}}
\end{split}    
\end{equation*}
\end{linenomath}

\noindent At position \textit{p} we have,
\begin{linenomath}
\begin{equation*}
\begin{split}
t_{i^{\textnormal{s}}_{p}} & =\displaystyle \sum^{i^{\textnormal{s}}_{p}}_{k=i^{\textnormal{s}}_{1}} \mathbf{w}_{k} \quad \textnormal{or,}\\
t_{i^{\textnormal{s}}_{p}} & = \displaystyle \sum^{i^{\textnormal{s}}_{p-2}}_{k=i^{\textnormal{s}}_{1}} \mathbf{w}_{k} + \mathbf{w}_{i^{\textnormal{s}}_{p-1}} + \mathbf{w}_{i^{\textnormal{s}}_{p}} \quad \textnormal{or,}\\ t_{i^{\textnormal{s}}_{p}} & = \displaystyle \sum^{i^{\textnormal{lcp}}_{p-2}}_{k=i^{\textnormal{lcp}}_{1}} \mathbf{w}_{k} + \mathbf{w}_{i^{\textnormal{lcp}}_{p}} + \mathbf{w}_{i^{\textnormal{lcp}}_{p-1}} \quad \textnormal{or,}\\
t_{i^{\textnormal{s}}_{p}} & = \displaystyle \sum^{i^{\textnormal{lcp}}_{p}}_{k=i^{\textnormal{lcp}}_{1}} \mathbf{w}_{k} \quad \textnormal{or,}\\[2 pt]
t_{i^{\textnormal{s}}_{p}}  & = \; t_{i^{\textnormal{lcp}}_{p}}\quad \textnormal{so;}\\[2 pt]
t_{i^{\textnormal{s}}_{p}} & = t_{i^{\textnormal{lcp}}_{p}}\\
\end{split}    
\end{equation*}
\end{linenomath}

\noindent Similar to before we can show that, $t_{i^{\textnormal{s}}_{j}} = t_{i^{\textnormal{lcp}}_{j}}, \quad \textnormal{for}\; j = p+2, ..., n$.

The newly constructed serial allocation formed by swapping based on the sorted increasing order of the amount of work required to be completed by an agent has improved the completion time of the agent at position $p-1$ and kept the completion times of the remaining agents the same. Hence, we have reached a contradiction that the assumed serial allocation is the LCP serial allocation.
\end{proof}

\begin{customlemma}{A.2} \label{sjf-ef}
SJF allocation satisfies ex-ante Envy-freeness.
\end{customlemma}

\begin{proof}
Let us assume that the amount of work required to be completed by \textit{n} set of agents are in the order $k_{1} \leq k_{2} \leq ... \leq k_{n}$. Then, the SJF allocates the resource to agents in that order breaking ties randomly. The cost of the agents are given by 
\begin{linenomath}
\begin{align*}
\begin{split}
c_{1}(\mathbf{A}_{1}(\cdot)) & = k_{1}, \\
c_{2}(\mathbf{A}_{2}(\cdot)) & = k_{1} + k_{2}, \\
& \vdots \qquad \qquad \qquad \textnormal{(7)}\\
c_{n}(\mathbf{A}_{n}(\cdot)) & = \displaystyle \sum_{i = 1}^{n} k_{i} \\
\end{split}    
\end{align*}
 \end{linenomath}

\noindent In a slight abuse of notation, we refer to $w_{i}(\mathbf{A}_{j}(\cdot)) = \displaystyle \int_{0}^{t_{j}} \mathbf{A}_{j}(t) dt$ as the amount of work done by agent \textit{i} with agent \textit{j}'s allocation. An agent \textit{i} envies agent \textit{j} in allocation $\mathbf{A}(\cdot)$, if agent \textit{i} is able to complete its required amount of work $k_{i}$ at a (strictly) lower cost with agent \textit{j}'s allocation. Mathematically, agent \textit{i} envies agent \textit{j} if, $w_{i}(\mathbf{A}_{j}(\cdot)) \; \geq \; k_{i}$, and $ c_{i}(\mathbf{A}_{i}(\cdot)) \; > \; c_{i}(\mathbf{A}_{j}(\cdot))$. In other words, agent \textit{i} would have completed its required amount of work $k_{i}$ earlier with allocation $\mathbf{A}_{j}(\cdot)$. 

Now, we show that the SJF allocation is Envy-free in expectation. We know from Equation (7), that the completion time of an agent is ordered based on the amount of work it is required to complete. The only possibility of envy occurs when an agent \textit{i} completes its work later than agent \textit{j}, that is, $t_{i}\; > \;t_{j}$. Since agent \textit{i} is scheduled later, 
$w_{i}(\mathbf{A}_{j}(\cdot))  =  k_{j} \; \leq \; k_{i}$.
Hence, the envy occurs only when the agents have equal number of tasks. Since the agent that gets scheduled earlier when there are equal number of tasks is selected randomly, the SJF allocation is  Envy-free in expectation.
\end{proof}

\begin{customcorollary}{A.3} \label{single-res-ef}
For single resource allocations, the LCP mechanism is Envy-free in expectation.
\end{customcorollary}

\begin{proof}
The proof of this corollary follows directly from Lemma \ref{sjf-lcp} and Lemma \ref{sjf-ef}.
\end{proof}

\subsection{LCP Mechanism with Two Agents} \label{lcp-2-agents}

We now look at the the LCP mechanism for the allocation of multiple resources when only two agents are present. The cost product for allocation involving two agents is expressed as\\  
\begin{linenomath}
\begin{equation*}
    CP(\lambda_{1,1}, \lambda_{2,1}, \lambda_{2,2}) = \dfrac{k_{1}}{\lambda_{1,1}} \cdot \left(\displaystyle  \frac{k_{2} - \displaystyle \frac{k_{1}}{\lambda_{1,1}} \cdot \lambda_{2,1}}{\lambda_{2,2}} + \frac{k_{1}}{\lambda_{1,1}} \right)
\end{equation*}
\end{linenomath}

\noindent where $\lambda_{1,1}$ is the dominant resource allocation of the first agent to complete its work, and $\lambda_{2,1}, \; \lambda_{2,2}$ are the dominant resource allocations of the second agent to complete its work in the first and the second time intervals respectively. 

\begin{customlemma}{B.1} \label{lcp-2-agents-ef}
The LCP allocations are envy-free in expectation for the case of two agents.
\end{customlemma}

\begin{proof}
From Lemma \ref{cost-prod1} and Theorem \ref{cost-prodLCP} we know that the LCP allocation is either the Shortest Job First allocation or an optimal shared allocation. We prove that the LCP allocation $\mathbf{A}^{\textnormal{LCP}}(\cdot)$ satisfies envy-freeness in expectation for each of the two types of the LCP allocations.

If the LCP allocation is of the form of Shortest Job First (SJF) allocation for the given instance of $\mathbf{W}$, that is, $\mathbf{A}^{\textnormal{LCP}}(\cdot) = \mathbf{A}^{\textnormal{SJF}}(\cdot)$, then from Lemma \ref{sjf-ef} we know that EF in expectation is satisfied.

Now, we consider the case where the LCP allocation is a shared allocation in the first round corresponding to one of the extreme points of the Pareto frontier and a non-wasteful allocation (defined in the proof of Theorem \ref{drf-w_properties}) for the second agent in the second round, that is, $\mathbf{A}^{\textnormal{LCP}}(\cdot) = \mathbf{A}^{\textnormal{shared}}(\cdot)$.
Let us assume for the sake of contradiction that an agent 1 envies agent 2 in this shared allocation. From the definition of envy given in Lemma \ref{sjf-ef} we know that agent 1 would be able to complete its required work by switching to the shared allocation of agent 2. Thus,
\begin{linenomath}
\begin{equation*}
k_1 \leq w_{1}(\mathbf{A_{2}}(\cdot)) \leq w_{2}(\mathbf{A_{2}}(\cdot)) = k_2.    
\end{equation*}
\end{linenomath}

Let $\sigma$ be the relative rate of progress agent 1 can make with resources allocated in according to agent 2's requirements.  That is, $\sigma = \min_r d_{2r} / d_{1r}$.  Agent 1's faster rate of progress with agent 2's allocation means that $\lambda_{11} < \sigma\lambda_{21} \leq \lambda_{21}$.  Since $\lambda_{11} + \sigma\lambda_{21} \leq 1$ by the resource constraint on agent 1's dominant resource, $\lambda_{11} < 0.5$

Consider instead the allocation which gives agent 1 all resources and then agent 2 all resources.  (Since $k_1 \leq k_2$ this is a SJF allocation.)  Let $t_i$ be the completion times under the original allocation and $t_i'$ be those under the new allocation.  Then,
\begin{linenomath}
\begin{equation*}
t_1't_2' < (0.5 t_1)(k_1 + k_2) \leq t_1k_2 \leq t_1t_2.    
\end{equation*}
\end{linenomath}

This contradicts the original allocation minimizing the cost product.
\end{proof}

 Lemma \ref{lcp-1res_and_2agents} in the main manuscripts states the summary of the analysis of the LCP mechanism for a single resource, and the LCP mechanism with two agents.
 
\begin{customlemma} {6}
For single resource allocations, the LCP mechanism allocates the resource in the form of the Shortest Job First (SJF) and the allocations are envy-free in expectation. The LCP allocations are envy-free in expectation when allocating multiple resources for two agents.
\end{customlemma}

\begin{proof}
The proof of this Lemma follows directly from Lemmas \ref{sjf-lcp}, \ref{sjf-ef}, Corollary \ref{single-res-ef}, and Lemma \ref{lcp-2-agents-ef}.
\end{proof}


\subsection{Note on Experimental Simulations}

The experiments were conducted on Intel® Core™ i5-8500U CPU with four cores operating at a base frequency of 1.60 GHz in a 12 GB RAM system. All the code was programmed in Python and executed on a Linux based operating system. 

\end{document}